\newif\ifhideproofs
\declaretheoremstyle[bodyfont=\normalfont]{normalbody}
\newtheorem{theorem}{Theorem}[]
\newtheorem{lemma}[theorem]{Lemma}
\newtheorem{corollary}[theorem]{Corollary}
\newtheorem{proposition}[theorem]{Proposition}
\newtheorem{fact}[theorem]{Fact}
\newtheorem{definition}[theorem]{Definition}
\title{Classification Protocols with Minimal Disclosure}
\author{Jinshuo Dong\thanks{Northwestern University. Email: \texttt{jinshuo@northwestern.edu}.} \and Jason Hartline\thanks{Northwestern University. Email: \texttt{hartline@northwestern.edu}.} \and Aravindan Vijayaraghavan\thanks{Northwestern University. Email: \texttt{aravindv@northwestern.edu}.}}
\newcommand{\cH}{\mathcal{H}}
\newcommand{\Leak}{\mathrm{Leak}}
\newcommand{\CP}{{\mathcal C}^*}
\newcommand{\cone}{\mathrm{cone}}
\newcommand{\Dir}{{\mathcal D}}
\newcommand{\SH}{{\mathcal{H}}}
\newcommand{\Safe}{\text{Safe}}
\newcommand{\Verts}{\text{Verts}}
\newcommand{\Sm}{S_-}
\newcommand{\Sp}{S_+}
\newcommand{\Am}{A_-}
\newcommand{\Ap}{A_+}
\newcommand{\xm}{x_{-}}
\newcommand{\xp}{x_{+}}
\newcommand{\proto}{{\mathcal M}}
\newcommand{\protor}{\proto^{R}}
\newcommand{\strat}{\sigma}
\newcommand{\reals}{{\mathbb R}}
\newcommand{\TTP}{Trent} 
\newcommand{\cTTP}{Trent} 
\newcommand{\nadded}[1]{#1}
\newcommand{\nsubtracted}[1]{}
\newcommand{\added}[1]{#1}
\newcommand{\iprod}[1]{\langle #1 \rangle}
\begin{document}

\date{}

\maketitle
\begin{abstract}
  We consider multi-party protocols for classification that are
  motivated by applications such as e-discovery in court proceedings.
  We identify a protocol that guarantees that the requesting party
  receives all responsive documents and the sending party discloses
  the minimal amount of non-responsive documents necessary to prove
  that all responsive documents have been received.  This protocol can
  be embedded in a machine learning framework that enables automated
  labeling of points and the resulting multi-party protocol is
  equivalent to the standard one-party classification problem (if the
  one-party classification problem satisfies a natural
  independence-of-irrelevant-alternatives property).  Our formal
  guarantees focus on the case where there is a linear classifier that
  correctly partitions the documents.
\end{abstract}

\section{Introduction}

This paper considers the multi-party classification problem that
arises in document review for discovery in legal proceedings.  The
plaintiff (henceforth: Bob) issues a {\em request for production} to
the defendant (henceforth: Alice).  The legal team of Alice is then
accountable for reviewing all documents and provides the responsive
ones.  \citet{GC-10} show this manual process can be significantly
improved by automation.  A potential issue with the adoption of this
technology, however, is that automation could reduce transparency and
accountability, and the accuracy and completeness of this process
relies critically on the accountability of Alice's legal team and its
obligations under the rules of professional responsibility.

In addition to accountability, \citet{GK-15} identify significant
problems with the above method for discovery. First, the defendant (Alice)
bears most of the cost of reviewing and selecting the responsive
documents and this asymmetry could lead the plaintiff (Bob) to exploit
such costly requests.  Second, it misaligns the incentives of the
Alice's legal team and causes the team, on the grounds of professional
responsibility, to conduct work to benefit its adversary.

Another possible way to implement requests for production places the
effort and accountability on the plaintiff.  Bob issues a request for
production to Alice.  Alice delivers all the documents to Bob's legal
team.  Bob's legal team identifies the responsive documents (and
discards the non-responsive ones).  Of course, there is now a risk
that Bob's legal team might learn facts from the documents not
specified in the request for production.  Alice and Bob may enter into a
confidentiality agreement under the order of the court to protect the
disclosure of Alice's private information and Bob's legal team should
operate under its obligations under the rules of professional
responsibility.

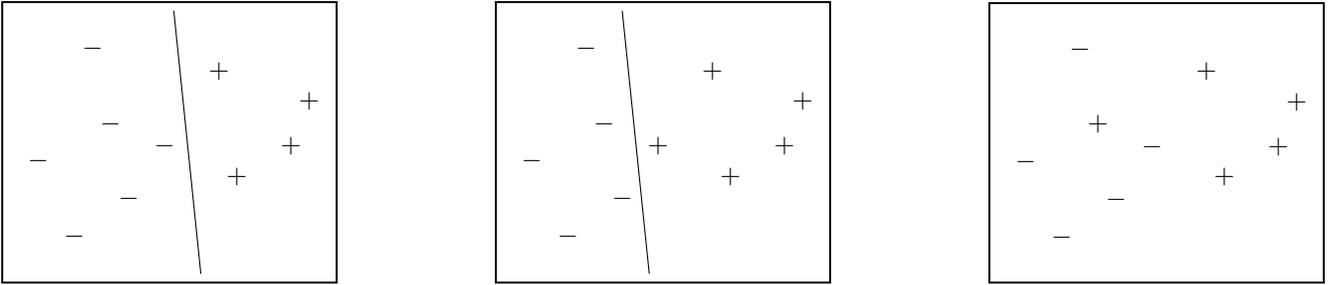
\begin{figure*}[tb]
\fbox{
\begin{tikzpicture}[xscale=1.2]
\node at (1,2) {$-$};
\node at (2,1.5) {$-$};
\node at (1.4,1) {$-$};
\node at (1.6,3.5) {$-$};
\node at (2.4,2.2) {$-$};
\node at (1.8,2.5) {$-$};

\node at (3,3.2) {$+$};
\node at (3.2,1.8) {$+$};
\node at (4,2.8) {$+$};
\node at (3.8,2.2) {$+$};

\draw [-] (2.8,0.5) -- (2.5,4);
\end{tikzpicture}}
\hfill
\fbox{
\begin{tikzpicture}[xscale=1.2]
\node at (1,2) {$-$};
\node at (2,1.5) {$-$};
\node at (1.4,1) {$-$};
\node at (1.6,3.5) {$-$};
\node at (2.4,2.2) {$+$};
\node at (1.8,2.5) {$-$};

\node at (3,3.2) {$+$};
\node at (3.2,1.8) {$+$};
\node at (4,2.8) {$+$};
\node at (3.8,2.2) {$+$};

\draw [-] (2.3,0.5) -- (2,4);
\end{tikzpicture}}
\hfill
\fbox{
\begin{tikzpicture}[xscale=1.2]
\node at (1,2) {$-$};
\node at (2,1.5) {$-$};
\node at (1.4,1) {$-$};
\node at (1.6,3.5) {$-$};
\node at (2.4,2.2) {$-$};
\node at (1.8,2.5) {$+$};

\node at (3,3.2) {$+$};
\node at (3.2,1.8) {$+$};
\node at (4,2.8) {$+$};
\node at (3.8,2.2) {$+$};

\path (2.3,0.5) -- (2,4);
\end{tikzpicture}}
\caption{Left: The labeled points are linearly separable; Middle: The labeled points are linearly separable when the right-most negative point is relabeled as positive; Right: The labeled points are not linearly separable when the center negative point is relabeled as positive.  The right-most negative point is critical; the center negative point is non-critical.}
\label{f:critical}
\end{figure*}

This paper aims to understand multi-party binary classification
protocols that rely as little as possible on external means of
accountability.  We
aim for protocols that satisfy three main properties:
\begin{enumerate}
    \item (Correct) Bob receives all responsive documents.
    \item (Minimal) Alice minimizes privacy loss (as few non-responsive documents as possible are revealed to Bob).
    \item (Computationally Efficient): Algorithms run by all parties are computationally efficient.
\end{enumerate}
We will also be interested in a fourth property which our protocol will satisfy:
\begin{enumerate}
\item[(4)] (Truthful) Alice's best strategy in the protocol is to truthfully reveal the set of relevant documents.
\end{enumerate}

Our protocols will make use of a trusted third party, Trent.  Trusted
third parties are common in the design of protocols, and can often be
replaced with secure multi-party computation \citep{yao-86,GMW-87}.
For e-discovery of electronic mail, many companies already use third
parties for email storage, and perhaps these third parties can take on
the role of the trusted third party in our protocol.  Alternatively,
the court system could provide the trusted third party.

After introducing a trusted third party, there is now a third possible
protocol: Bob communicates a classifier to Trent and Trent uses the
classifier identify the responsive documents, checks these documents
with Alice, and then communicates them to Bob.  The problem with this
approach is that, while we assume Bob's legal team can identify
whether or not any given document is responsive, we do not assume that
Bob can succinctly communicate such a labeling strategy in the form of
a machine executable classifier.  For Bob to produce such a classifier
Bob needs real documents that only Alice possesses.  We furthermore do
not assume that Trent is capable of non-mechanical tasks.

Our protocol is based on two assumptions on the environment.  First,
we assume that Alice can and will provide Trent with all documents
before the protocol begins.  We view this assumption as much weaker
than the current standard assumption of the leading paragraph where
Alice is required to provide only the relevant documents.  Providing
all documents is a weaker requirement than providing the relevant
documents because there is no potential discretion involved.  Alice's
legal team cannot claim to think an document was not responsive as a
justification for not providing it.  Our second assumption is that both
Alice and Bob can determine responsiveness of documents and if there is a
disagreement in responsiveness, that this disagreement can be resolved
by the court.

Our problem is one of multi-party classification, dividing the
documents (henceforth: points) into responsive (positive points) and
non-responsive (negative points).  We assume that there is a
classifier that is consistent with the labeling of the documents
(See \Cref{f:critical}).  A key construct in the protocol is, given a
set of alleged positive points and a set of alleged negative points
that are separable, identifying all the other points that could be labeled
as positive by a consistent classifier.  We will refer to these points as {\em leaked}.  A key quantity for
our protocol is the {\em critical points}: negative points that are
leaked when all other points are known.  It is easy to see that there
is no way Trent can be convinced that a critical point is negative
without confirming its negative label with Bob.

\vspace{40pt}

With these constructs we define the {\em critical points protocol}:
\begin{enumerate}
\item[(0)] Alice discloses all points to Trent.
\item Alice discloses to Trent which of the points she alleges as positive.
\item Trent assumes that all remaining points are negative and
  computes the alleged critical points.
\item Trent sends the alleged positive and alleged critical points to Bob.
\item Bob labels these points and sends the labels to Trent.
\item Trent checks Bob's and Alice's labels agree (resolving any disagreement in court).
\item Trent sends the leaked points corresponding to the correctly labeled points to Bob. 
\end{enumerate}

Our main protocol and results will be for binary classification with linear
classifiers (See \Cref{f:critical}).  In these settings there is a projection from document
space into a high-dimensional space of real numbers.  Classifiers are
given by hyperplanes that partition the space into two parts, the
positives and the negatives.  The assumption that there exists such a
consistent classifier is known as the {\em realizable} setting  or the {\em (linearly) separable} setting. 
Our main result
is that for linear classification in the realizable setting, the critical
points protocol is {\em correct}, {\em minimal}, and {\em computationally efficient}. (See \Cref{sec:ccp}
for formal definitions and theorem statements.)

While we focus on linear classification for exposition, our main result also extends to more powerful kernel-based classifiers like kernel support vector machines. Kernel methods embed the input space into a feature space that is higher-dimensional (potentially infinite dimensional), where the data is potentially linearly separable. Hence they capture more expressive hypothesis classes like polynomial threshold functions\footnote{The label of $x$ is given by the sign of a polynomial $p(x)$; linear classifiers correspond to the special case of degree-$1$ polynomials.}, and even neural networks in some settings~\citep{kernelbook,ShaiBSSbook, jacotetal}. See \Cref{sec:kernel} for details.

We can also show that the basic critical points protocol can be
embedded within a machine learning framework that includes several of
the technology-assisted review processes studied by \citet{CG-14}.  In
this framework, there is a large universe of documents.  This large
universe of documents is sampled.  The critical points protocol is run
on the sample with labels provided by hand by the legal teams of Bob
and Alice as specified.  When the protocol terminates with Bob
possessing both the critical negative points and the positive points,
Bob selects a classifier that is consistent with these points.  Bob
reports this classifier to Trent who checks that it is consistent with the labeled points and then applies it to the universe
of points and gives Bob all the points that are classified as
positive.  We prove that if Bob's classification algorithm satisfies
{\em independence of irrelevant alternatives}, i.e., if the classifier
selected is only a function of the set of consistent classifiers for
the labeled points, then the outcome of this process is equivalent to
the outcome of an analogous single-party classification procedure. 
In the case of linear classification and kernel-based classifiers, this can be instantiated with the support vector machine (SVM) algorithm that we prove satisfies the IIA property. 
(See \Cref{sec:mlframework} for formal statements.)

\paragraph{Related Work.}
Our work contributes to a growing literature on the theory of machine
learning for social contexts.  In this literature it is not enough for
the algorithm to have good performance in terms of error, but it must
also satisfy key definitions to be usable.  Like a number of problems
in this space, the gold-standard result is a reduction from the learning
problem with societal concerns to the learning problem without such
concerns.  For example, \citet{DHPR-12} construct fair classifiers
from non-fair classifiers.  A key perspective of this approach is it
enables the machine learning algorithm designers to plug in their
favorite algorithms, but results in a system with the desired societal
properties, in their case, fairness. Our results for the machine learning framework
in \Cref{sec:mlframework} are of a similar flavor: our protocol can be used in conjunction with 
any learning algorithm satisfying the IIA property to extend it to the multi-party setting.  

\citet{GRSY-21} consider interactive protocols for PAC (provably
approximately correct) learning.  They ask whether a verifier can be
convinced that a classifier is approximately correct with far fewer
labeled data points than it takes to identify a correct classifier.  In
the realizable case, the answer is yes.  More generally, they show
that there are classification problems where it is significantly
cheaper in terms of labeled data points; and there are classification
problems where it is no cheaper.  Connecting to our model, their
prover corresponds to Alice, their verifier corresponds to Bob.  The
big difference between their model and ours is that they assume that
Bob (the verifier) can freely sample labeled data points.  In our
model Bob does not have access to the data without getting it from
Alice. Moreover the main challenge in our setting is for Alice and Trent to convince Bob that no relevant
documents were left out. 















\section{Critical Points Protocol: Definitions, Protocol and Guarantees}
\label{sec:ccp}
\label{sec:cpp}

There are three parties: Alice (defendant), Bob (plaintiff) and {\TTP}
who is a trusted third party.  Alice has a set of data points
$S \subset \R^n$ (potentially the training samples), that is comprised
of positive examples $\Sp$ and negative examples $\Sm$ with their
disjoint union denoted by $\Sp \sqcup \Sm = S$. Alternately, each data point
corresponds to a labeled example of the form $(x,y)$ where
$x \in \R^n$ and $y \in \{\pm 1\}$, where $y=+1$ if $x \in \Sp$ and
$y=-1$ if $x \in \Sm$. These labeled examples are assumed to be
(strictly) linearly separable i.e., there exists $d \in \reals^n$ and
$c \in \R$ such that $y=h(x)=\text{sign}(d \cdot x - c)$.

There are potentially several rounds of interaction between Alice, Bob
and {\TTP}. We will adhere to the revelation principle (\Cref{s:truth})
and restrict attention to protocols where Alice only interacts once
and is asked to specify the positive labels $\Sp$ (the other labels
$S \setminus \Sp$ are assumed to be the negative points $\Sm$).  For a
truthful mechanism $\proto$, we will denote by $\proto(\Ap,\Sp) \subset
S$ the set of points that are revealed to Bob eventually when Alice
reports $\Ap$ and the true labels are $\Sp \subset S$. Here
$\proto(\Ap,\Sp)$ is the output of the protocol.  We aim for a
protocol which satisfies the following four properties.

\begin{definition}
Protocol $\proto$ properties on all data sets $S = \Sp \sqcup \Sm$ and all
reports $\Ap \subset S$:
\begin{enumerate}
\item (Correct) The positive points are revealed to Bob, \\
i.e., $\proto(\Ap,\Sp) \supseteq \Sp.$
\item (Minimal) (If Alice reports truthfully) the protocol minimizes the number of negative points revealed, i.e., \\
$|\proto(\Sp,\Sp) \setminus \Sp|$ is minimized.
\item (Computational Efficiency) The algorithms run by all parties are computationally tractable.
\item (Truthful) Alice's best strategy is to truthfully reveal the set of relevant documents, i.e.,  $\Ap = \Sp$ minimizes $\proto(\Ap,\Sp)$.
\end{enumerate}
\end{definition}

In the protocol we will define, Alice and Bob will be expected simply
to label points.  The complex computations will be mechanically
performed by {\TTP}, the trusted third party.  The basic computation
performed by {\TTP} is the $\Leak$ operator which, given a subset of
linearly separable points labeled as positives and negatives
$\Ap \sqcup \Am = A$, determines the set of all the points $S$ that
are labeled as positive by some classifier consistent with the labels
of $A$. Let $\cH=\{h(x) = \text{sign}(d \cdot x - c): d \in \reals^n,
c \in \reals \}$ denote the set of all linear classifiers over
$\reals^n$.

\begin{definition}
The consistent classifiers for points $\Ap \sqcup \Am = A$ is $\SH(\Ap,\Am)=$ $$ \{h \in \cH : \forall \xp \in \Ap,\ h(\xp) = +1 \text{ and } \forall \xm \in \Am,\ h(\xm) = -1\}.$$
\end{definition}

\begin{definition} The leak operator $\Leak(\Ap,\Am)$ is all the points in $S$ are classified by positive by some consistent classifier.
$$\Leak(\Ap,\Am) = \{x \in S : \exists h \in \SH(\Ap,\Am),\ h(x) = +1\} \added{\cup \Am}$$
\end{definition}

\begin{lemma} For linear classification, whether or not a point $x \in S$ is in $\Leak(\Ap,\Am)$ is a linear classification problem (and can be computed in polynomial time).
\end{lemma}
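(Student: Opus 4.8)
The plan is to reduce the membership question ``is $x \in \Leak(\Ap,\Am)$?'' to a single linear feasibility test. First I would dispose of the trivial case: if $x \in \Am$ then $x \in \Leak(\Ap,\Am)$ directly from the definition (the $\cup\,\Am$ term), so I may assume $x \notin \Am$. Unwinding the definitions, $x$ is leaked exactly when there exists a classifier $h(\cdot) = \text{sign}(d \cdot (\cdot) - c)$ lying in $\SH(\Ap,\Am)$ with $h(x) = +1$. Writing out the constraints that define $\SH(\Ap,\Am)$ together with the extra requirement $h(x) = +1$, this is precisely the question of whether there exist $(d,c) \in \reals^{n+1}$ satisfying the finite system of strict linear inequalities
\begin{align*}
 d \cdot \xp - c &> 0 \qquad \text{for all } \xp \in \Ap, \\
 d \cdot \xm - c &< 0 \qquad \text{for all } \xm \in \Am, \\
 d \cdot x - c &> 0.
\end{align*}

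The key observation is that this system is nothing other than the statement that the augmented point sets $\Ap \cup \{x\}$ (treated as positives) and $\Am$ (treated as negatives) are linearly separable; equivalently, $x$ is leaked if and only if $\SH(\Ap \cup \{x\}, \Am) \neq \emptyset$. In this way deciding membership in $\Leak(\Ap,\Am)$ is literally an instance of the linear classification (separability) problem on $|\Ap| + |\Am| + 1$ points in $\reals^n$, which establishes the first assertion of the lemma. I expect this identification to be essentially immediate from the definitions and to carry the conceptual weight of the statement.

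For the efficiency claim I would observe that feasibility of a finite system of linear inequalities in the $n+1$ unknowns $(d,c)$ is a linear program, solvable in polynomial time by the ellipsoid or interior-point method. The only mild wrinkle is that the inequalities above are strict. I would handle this in the standard way: introduce a scalar margin variable $\gamma$, impose a box normalization such as $\|(d,c)\|_\infty \le 1$, and then maximize $\gamma$ subject to $d \cdot \xp - c \ge \gamma$, $\ c - d \cdot \xm \ge \gamma$, and $d \cdot x - c \ge \gamma$. Since any strictly feasible solution can be rescaled into the box while keeping a positive margin, the original strict system is satisfiable if and only if this LP has optimal value $\gamma^* > 0$; the LP has size polynomial in $|\Ap|+|\Am|$ and $n$.

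The mathematical content here is light, and I anticipate no serious obstacle. The only points needing care are the strict-inequality reformulation just described and the implicit appeal to the realizable (strictly separable) regime, which guarantees that $\text{sign}(d \cdot z - c)$ is unambiguous on the data so that no point sits exactly on the hyperplane. The one genuinely technical step is verifying that the margin-maximization LP correctly certifies strict feasibility and runs in polynomial time over rational inputs, but this is a routine application of the standard theory of linear programming.
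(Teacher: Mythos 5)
Your proposal is correct and follows essentially the same route as the paper: handle the case $x \in \Am$ trivially, then reduce membership in $\Leak(\Ap,\Am)$ to the linear separability of $\Ap \cup \{x\}$ (as positives) from $\Am$ (as negatives), solved in polynomial time by a linear program (the paper invokes the Hard-SVM formulation, which is the same margin-maximization device you describe explicitly). Your added detail on certifying strict feasibility via a normalized margin variable is a fleshed-out version of exactly what the paper's citation to its SVM fact accomplishes.
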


\begin{proof} We first check if $x \in \Am$; if yes, $x \in \Leak(\Ap, \Am)$. If not, the goal is to check if there is a linear classifier $h \in \cH$ that assigns a label $+1$ to all points in $\Ap \cup \{x\}$, and assigns the label $-1$ to all points in $\Am$. This is clearly a linear classification problem. This can be solved in polynomial time using the SVM algorithm (see Fact~\ref{fact:svm1} in \Cref{sec:svm}) or using a linear program.  
\end{proof}

In the introduction the critical points were defined as the points
that could are ambiguous with respect to the consistent classifiers
when Alice reports positives as $\Ap$ and all other points not labeled as positive by Alice are negatives.
The definition of critical points requires that $\Ap$ is linearly
separable from $S \setminus \Ap$.

\begin{definition} 
The critical points for set $\Ap \subset S$ are
$$
\CP(\Ap) = \{x \in S: x \in \Leak(\Ap,S \setminus \Ap \setminus \{x\})\}~ \added{\big\backslash A_+}.
$$
\end{definition}

We refer to the following algorithm as the critical points protocol
because, as we will subsequently prove (in \Cref{thm:verts}), 
\begin{equation} \label{eq:main:verts}
    \Leak(\Ap,\CP(\Ap))
= \Ap \cup \CP(\Ap).
\end{equation}  
Thus, if Alice truthfully reports $\Ap = \Sp$
then the protocol terminates with the only negative points disclosed being
$\CP(\Ap)$.

\begin{algorithm}
		\caption{Critical Points Protocol
		(CPP)}\label{alg:ccp}
                \begin{algorithmic}[1]
                \State Alice sends all points $S$ to {\TTP}.
		\State Alice sends alleged positive points $\Ap \subset S$ to {\TTP}.
                \If{$\Ap$ and $S \setminus \Ap$ are not separable}
                \State {\TTP} sends $S$ to Bob and the protocol ends.
                \EndIf
		\State {\TTP} computes critical points $\CP(\Ap)$ and sends $\Ap \cup \CP(\Ap)$ to Bob.
                \State Bob labels the points and sends labels to {\TTP}.
                \State {\TTP} checks that Bob and Alice's labels are consistent sending any disputed labels to be resolved by the court.  Denote the resulting labeled points by $\Ap' \sqcup \Am'$
                \State {\TTP} sends $\Leak(\Ap',\Am')$ to Bob and the protocol ends.
          	\end{algorithmic}
		\end{algorithm}




\begin{theorem}\label{thm:min}
  When the data points $S$ with positive samples $\Sp$ and negative
  samples $\Sm$ are linearly separable, the critical points protocol
  (\Cref{alg:ccp}) is (1) Correct, (2) Minimal i.e., for all $\Ap$
  $\proto(\Ap, \Sp) \supseteq \Sp \cup \CP(\Sp)$, and if Alice is
  truthful and reports $\Sp$ to {\TTP}, then $\proto(\Sp, \Sp) = \Sp
  \cup \CP(\Sp)$. Furthermore this protocol is (3) Computationally
  efficient and (4) Truthful.
\end{theorem}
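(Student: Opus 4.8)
The plan is to first pin down, for an arbitrary report $\Ap$ and true positives $\Sp$, exactly what the protocol outputs, and then check the four properties against this characterization. In the branch where $\Ap$ and $S \setminus \Ap$ fail to be separable, {\TTP} discloses all of $S$, so $\proto(\Ap,\Sp) = S$ and every containment we need is immediate. In the separable branch the points sent to Bob are $\Ap \cup \CP(\Ap)$; since Bob labels truthfully and every Alice--Bob disagreement is resolved to the true label by the court, the relabeled sets are exactly the true partition of these points, namely $\Ap' = (\Ap \cup \CP(\Ap)) \cap \Sp$ and $\Am' = (\Ap \cup \CP(\Ap)) \cap \Sm$, so that $\proto(\Ap,\Sp) = \Leak(\Ap',\Am')$. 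I would record this reduction as a short preliminary step, since all four properties then become statements about $\Leak(\Ap',\Am')$.

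For correctness I would invoke the separability hypothesis: let $h^\ast$ be a separating classifier, so $h^\ast$ labels $\Sp$ positively and $\Sm$ negatively. Because $\Ap' \subseteq \Sp$ and $\Am' \subseteq \Sm$, we have $h^\ast \in \SH(\Ap',\Am')$, and hence every $x \in \Sp$ satisfies $h^\ast(x) = +1$, putting $x \in \Leak(\Ap',\Am')$; this gives $\Sp \subseteq \proto(\Ap,\Sp)$ for every report. For the equality at the truthful report $\Ap = \Sp$, note $\CP(\Sp) \subseteq \Sm$, so the sent points split as $\Ap' = \Sp$ and $\Am' = \CP(\Sp)$, and the output is $\Leak(\Sp,\CP(\Sp))$, which equals $\Sp \cup \CP(\Sp)$ by \Cref{thm:verts}. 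Computational efficiency is bookkeeping: Alice and Bob only label points, while {\TTP} checks separability by a linear program and computes $\CP(\Ap)$ and $\Leak(\Ap',\Am')$ by running the per-point leak test of the preceding lemma $O(|S|)$ times, each a polynomial-time linear classification problem. Truthfulness then follows once minimality is in hand: since $\proto(\Ap,\Sp) \supseteq \Sp \cup \CP(\Sp) = \proto(\Sp,\Sp)$ for all $\Ap$, truthful reporting weakly minimizes the disclosed set.

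The crux is the minimality containment $\CP(\Sp) \subseteq \proto(\Ap,\Sp)$ for an arbitrary, possibly untruthful, report $\Ap$, and this is where the definition of critical points does the real work. Fix $x \in \CP(\Sp)$; by definition $x \in \Sm$ and $x \in \Leak(\Sp,\Sm \setminus \{x\})$, so there is a classifier $g$ that labels all of $\Sp$ positive, all of $\Sm \setminus \{x\}$ negative, and $x$ itself positive. In the separable branch I would split on whether $x \in \Am'$: if so, then $x \in \Leak(\Ap',\Am')$ trivially since $\Leak$ contains its negative set; if not, then $\Am' \subseteq \Sm \setminus \{x\}$, so $g$ labels $\Ap' \subseteq \Sp$ positive and $\Am'$ negative, giving $g \in \SH(\Ap',\Am')$ with $g(x) = +1$, hence again $x \in \Leak(\Ap',\Am')$. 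The non-separable branch is trivial since everything is disclosed. I expect this to be the main obstacle, not because any single deduction is hard but because one must correctly track how $\Ap'$ and $\Am'$ depend on the adversarial report $\Ap$ and recognize that the witnessing classifier $g$ supplied by the definition of $\CP(\Sp)$ stays consistent with the true labels of whatever subset of points Bob ends up verifying. Assembling these pieces yields all four properties.
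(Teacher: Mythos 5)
Your proposal is correct and takes essentially the same route as the paper: reduce the protocol's output to $\Leak(\Ap',\Am')$ with $\Ap' \subseteq \Sp$ and $\Am' \subseteq \Sm$, invoke \Cref{thm:verts} to get the equality $\Leak(\Sp,\CP(\Sp)) = \Sp \cup \CP(\Sp)$ at the truthful report, and obtain minimality (and hence truthfulness) from the containment $\Leak(\Ap',\Am') \supseteq \Sp \cup \CP(\Sp)$. The only cosmetic difference is that where the paper states two general monotonicity claims for $\Leak$ and chains them, you inline the same reasoning by directly exhibiting witnessing classifiers ($h^*$ for correctness, and the classifier $g$ certifying $x \in \CP(\Sp)$ for minimality, with the case split on $x \in \Am'$) and checking they remain consistent with $(\Ap',\Am')$.
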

We remark that minimality of the CPP protocol holds in a stronger
sense: in every {\em correct} protocol, Bob observes $\Sp \cup
\CP(\Sp)$; see \Cref{prop:minimal} for a proof. Please also see
\Cref{s:truth} for a revelation principle.

We now proceed to the proof of \Cref{thm:min}.  Note that Bob either
sees all of $S$, or he sees $\Leak(A'_+, A'_-)$
for some appropriate sets $\Ap' \subset \Sp$ and $\Am' \subset \Sm$.

\begin{proof}[Proof of \Cref{thm:min}]
The proof of the theorem follows from the following three
claims for any $\Ap \sqcup \Am \subset S$:
\begin{enumerate}
\item $\Leak(\Ap,\Am)$ is non-increasing in $\Ap$.

  For fixed $\Am$, and consider $\Ap' \supset \Ap$ that is separable
  from $\Am$.  Separability of $\Ap' \sqcup \Am$ implies that the new
  points $\Ap' \setminus \Ap$ were also previously leaked in
  $\Leak(\Ap,\Am)$.  Moreover, for other points in $S$, there are now
  more constraints on separating hyperplanes so only fewer of them
  will be leaked.  In total, no more points are leaked by
  $\Leak(\Ap',\Am)$.
    
\item $\Leak(\Ap,\Am)$ always contains $\CP(\Ap)$.

  Fix $\Ap$, whether or not point $x$ is leaked is monotone decreasing
  in $\Am \not\ni \x$.  Adding points to $\Am$ only adds constraints
  on separating hyperplanes making it only harder for $x$ to be
  leaked.  Since $x \in \CP(\Ap)$ is leaked by $\Leak(\Ap,S \setminus
  \Ap \setminus \{x\})$ then by monotonicity $x$ is leaked by
  $\Leak(\Ap,\Am)$ on all $\Am$ separable from $\Ap$.
  
\item Nothing additional is leaked on $\Am = \CP(\Ap)$, \\
i.e.,
  $\Leak(\Ap,\CP(\Ap)) = \Ap \cup \CP(\Ap)$.
  This claim will be argued separately by \Cref{thm:verts}, below.
  
\end{enumerate}
These claims combine to give the theorem as follows.  In the protocol,
$\Ap' \subset \Sp$ and $\Am' \subset \Sm$.  By the first claim we have\\
$\Leak(\Ap',\Am') \supset \Leak(\Sp,\Am')$.  However, $\Leak(\Sp,\Am')
\supset \Sm \cup \Am'$ and by the second claim $\Leak(\Sp,\Am')
\supset \CP(\Sp)$; \\ 
thus, 
$\Leak(\Sp,\Am') \supset \Sm \cup \Am' \cup
\CP(\Sp)$ which, of course, is a superset of $\Sp \cup \CP(\Sp)$ which
is equal to $\Leak(\Sp,\CP(\Sp))$ by the third claim.  This latter
minimal outcome is obtained by truthtelling.
\end{proof}

\begin{figure*}[tb]
\fbox{
\begin{tikzpicture}[xscale=.66]

\draw [color=lightgray,ultra thick] (1,0) -- (5,4);
\draw [color=lightgray,ultra thick] (1,4) -- (5,0);
\draw [color=lightgray,ultra thick] (8,0) -- (0,4);
\draw [color=lightgray,ultra thick] (0,0) -- (8,4);
\draw [color=lightgray,ultra thick] (6,0) -- (6,4);

\node at (1,1) {$-$};
\node at (1,2) {$-$};
\node at (1,3) {$-$};
\node at (2,1) {$-$};
\node at (2,2) {$-$};
\node at (2,3) {$-$};
\node at (3,2) {$-$};

\node at (6,1) {$+$};
\node at (6,3) {$+$};
\node at (7,1) {$+$};
\node at (7,3) {$+$};

\path (8,0) -- (8,4);
\path (0,0) -- (0,4);
\end{tikzpicture}}
\hfill
\fbox{
\begin{tikzpicture}[xscale=.66]

\fill [color=lightgray] (0,0) -- (2,1) -- (3,2) -- (2,3) -- (0,4) -- (8,4) -- (6,3) -- (6,1) -- (8,0);

\node at (1,1) {$-$};
\node at (1,2) {$-$};
\node at (1,3) {$-$};
\node at (2,1) {$-$};
\node at (2,2) {$-$};
\node at (2,3) {$-$};
\node at (3,2) {$-$};

\node at (6,1) {$+$};
\node at (6,3) {$+$};
\node at (7,1) {$+$};
\node at (7,3) {$+$};

\path (8,0) -- (8,4);
\path (0,0) -- (0,4);
\end{tikzpicture}}
\hfill
\fbox{
\begin{tikzpicture}[xscale=.66]

\fill [color=lightgray] (0,0) -- (2,1) -- (3,2) -- (2,3) -- (0,4);

\node at (1,1) {$-$};
\node at (1,2) {$-$};
\node at (1,3) {$-$};
\node [circle,draw] at (2,1) {$-$};
\node at (2,2) {$-$};
\node [circle,draw] at (2,3) {$-$};
\node [circle,draw] at (3,2) {$-$};

\node [circle,draw] at (6,1) {$+$};
\node [circle,draw] at (6,3) {$+$};
\node [circle,draw] at (7,1) {$+$};
\node [circle,draw] at (7,3) {$+$};

\path (8,0) -- (8,4);
\path (0,0) -- (0,4);
\end{tikzpicture}}
\hfill
\caption{Left: Extreme separating hyperplanes are depicted.  Center: all separating hyperplanes are depicted.  Right: the space $\Safe(\Ap,\Am)$ is depicted where $\Ap = \Sp$ and $\Am = \CP(\Sp)$ are the circled minuses.  Note that $\Verts(\Safe(\Ap,\Am)) = \CP(\Sp)$.  All circled points are disclosed by the protocol when Alice reports $\Ap = \Sp$.}
\label{f:safe}
\end{figure*}
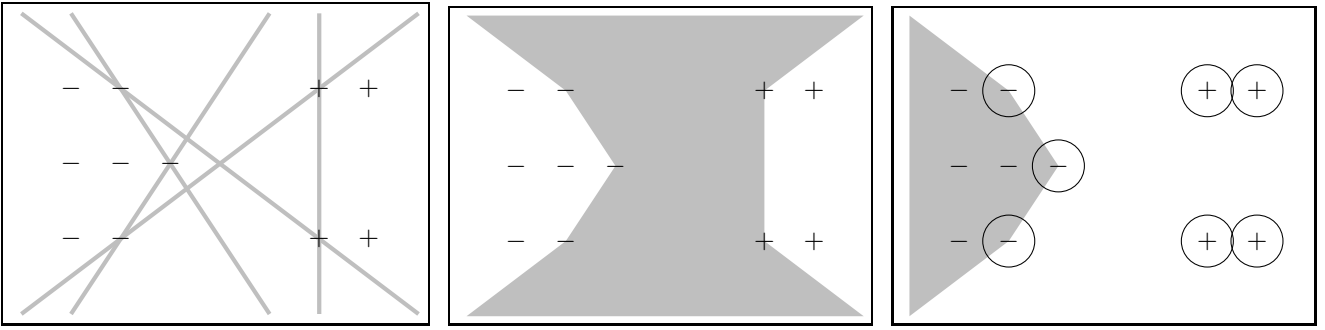

In the remainder of this section we prove equation \eqref{eq:main:verts} as
\Cref{thm:verts}.

\begin{theorem} \label{thm:verts}
For any $\Ap \subset S$, $\Leak(\Ap,\CP(\Ap)) = \Ap \cup \CP(\Ap).$
\end{theorem}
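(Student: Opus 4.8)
The plan is to recast everything in convex geometry and reduce the statement to a single equality about the safe region. First I would record the basic dictionary. Since $S$ is finite, a point $z \in \reals^n$ is leaked by $(\Ap,\Am)$ — i.e. some $h \in \SH(\Ap,\Am)$ has $h(z)=+1$ — exactly when $\Ap \cup \{z\}$ is strictly linearly separable from $\Am$, which (two finite sets with disjoint compact convex hulls are strictly separable) is equivalent to $\conv(\Ap \cup \{z\}) \cap \conv(\Am) = \emptyset$. Writing $\Safe(\Ap,\Am)$ for the set of non-leaked points, this gives two descriptions I would use interchangeably: $\Safe(\Ap,\Am) = \{z : \conv(\Ap\cup\{z\}) \cap \conv(\Am) \neq \emptyset\}$, and $\Safe(\Ap,\Am)$ as the intersection, over all consistent $h\in\SH(\Ap,\Am)$, of the closed negative half-spaces of $h$. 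The second shows $\Safe(\Ap,\Am)$ is closed and convex; the first shows it depends on $\Am$ only through $\conv(\Am)$ and that $\conv(\Am) \subseteq \Safe(\Ap,\Am)$.

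With this, the easy inclusion $\Leak(\Ap,\CP(\Ap)) \supseteq \Ap \cup \CP(\Ap)$ is immediate: $\CP(\Ap)$ is adjoined by the definition of $\Leak$, and $\Ap$ is leaked because $\Ap$ and $\CP(\Ap)$ are separable (as $\CP(\Ap) \subseteq S\setminus\Ap$), so $\SH(\Ap,\CP(\Ap)) \neq \emptyset$ and every such classifier labels $\Ap$ positive. For the reverse inclusion it then suffices to show that no negative point of $S$ is leaked, i.e. every $x \in S \setminus \Ap$ lies in $\Safe(\Ap,\CP(\Ap))$.

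The key reduction I would make is to the single equality $\Safe(\Ap,\CP(\Ap)) = \Safe(\Ap, S\setminus\Ap)$. This finishes the theorem, since every negative point $x \in S\setminus\Ap$ trivially lies in $\conv(S\setminus\Ap) \subseteq \Safe(\Ap,S\setminus\Ap)$, hence in $\Safe(\Ap,\CP(\Ap))$. One inclusion is free from monotonicity, because $\CP(\Ap) \subseteq S\setminus\Ap$ gives $\Safe(\Ap,\CP(\Ap)) \subseteq \Safe(\Ap, S\setminus\Ap)$; all the content is in the reverse inclusion, namely that adjoining the non-critical negatives must not enlarge the safe region. Geometrically this is the assertion that the critical points are exactly the vertices of $\Safe$ facing $\Ap$ (the identity $\Verts(\Safe(\Ap,\CP(\Ap))) = \CP(\Ap)$ illustrated in \Cref{f:safe}) and that every non-critical negative sits behind this frontier, invisible to any cone from $\conv(\Ap)$. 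To make this rigorous I would unwind criticality: $x$ is non-critical exactly when $\conv(\Ap\cup\{x\})$ meets $\conv((S\setminus\Ap)\setminus\{x\})$, equivalently when the segment from some $q\in\conv(\Ap)$ to $x$ crosses $\conv((S\setminus\Ap)\setminus\{x\})$. Then, given any $z \in \Safe(\Ap,S\setminus\Ap)$ with a witness $w \in \conv(\Ap\cup\{z\}) \cap \conv(S\setminus\Ap)$, I would reroute the representation of $w$: whenever a non-critical negative appears, replace it via its crossing point, pushing the witness toward $\conv(\CP(\Ap))$ while staying inside the cone $\conv(\Ap\cup\{z\})$, and argue by a minimal-counterexample or potential argument (e.g.\ on the total weight placed on non-critical negatives, or on $\max d\cdot x$ over the representation for an extreme separating direction $d$) that this terminates with a witness in $\conv(\CP(\Ap))$, so $z \in \Safe(\Ap,\CP(\Ap))$.

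I expect the main obstacle to be exactly this reverse inclusion — verifying that deleting non-critical negatives leaves $\Safe$ unchanged. The tempting one-point-at-a-time induction is deceptive: a non-critical negative need not lie in the convex hull of the remaining negatives (for instance the corner point at the lower left of \Cref{f:safe} is non-critical yet extreme in $\conv(S\setminus\Ap)$), so $\conv(S\setminus\Ap)$ genuinely shrinks as such points are removed and one cannot argue by a $\conv$-equality of the negative sets. The real point is that this shrinkage occurs only on the side away from $\Ap$ and so is never seen by segments emanating from $\conv(\Ap)$; turning this intuition into a clean proof — either through a careful exchange/rerouting argument that handles all non-critical points simultaneously, or by establishing the polyhedral identification of $\CP(\Ap)$ with the $\Ap$-facing vertices of $\Safe$ and invoking the decomposition of a polyhedron into its vertices and recession cone — is where the work lies.
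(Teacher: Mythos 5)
Your setup and reduction are sound: the convex-hull dictionary for leaking (a point $z$ is leaked by $(\Ap,\Am)$ iff $\conv(\Ap\cup\{z\})\cap\conv(\Am)=\emptyset$) is a correct reformulation, equivalent to the paper's direction-based definition of $\Safe$; the easy inclusion $\Leak(\Ap,\CP(\Ap))\supseteq\Ap\cup\CP(\Ap)$ is handled correctly; and reducing the theorem to the single equality $\Safe(\Ap,\CP(\Ap))=\Safe(\Ap,S\setminus\Ap)$ is exactly the right move (this equality is the paper's \Cref{c:same} combined with \Cref{l:connecting}). The problem is that you never prove the one inclusion that carries all the content, $\Safe(\Ap,S\setminus\Ap)\subseteq\Safe(\Ap,\CP(\Ap))$. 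What you offer is a rerouting sketch together with an appeal to ``a minimal-counterexample or potential argument,'' and you yourself state that making this rigorous ``is where the work lies.'' Since that inclusion \emph{is} the theorem, the proposal is a correct plan rather than a proof.

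Moreover, the sketch as stated does not obviously close. Non-criticality of $x$ gives a point $u_x=\mu_x x+(1-\mu_x)p_x\in\conv((S\setminus\Ap)\setminus\{x\})$ with $p_x\in\conv(\Ap)$ and $\mu_x\in(0,1]$; solving for $x$ and substituting into the witness $w=\sum_x\lambda_x x$ introduces \emph{negative} weight on $p_x$, which must be absorbed on the $\conv(\Ap\cup\{z\})$ side and then renormalized. After this exchange the weight moved onto $u_x$ is $\lambda_x/\mu_x\geq\lambda_x$, and $u_x$ may itself charge other non-critical negatives, so neither of your proposed potentials (total weight on non-critical negatives, or the maximum of $d\cdot(\cdot)$ over the support) is shown to decrease; meanwhile the coefficient on $z$ shrinks at every step, and compactness only yields a limit witness in $\conv(S\setminus\Ap)$, not in $\conv(\CP(\Ap))$. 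The paper closes this gap with polyhedral machinery that your sketch implicitly needs: Lemma~\ref{l:finitemax} uses Weyl's theorem and LP duality to show that $\Safe$, although defined by infinitely many half-space constraints (one per direction in $\Dir$), is a genuine polyhedron whose directions with finite maxima are exactly $\Dir(\Ap,\Am)$; \Cref{f:vertex} shows it contains no line, so the fundamental theorem of linear programming places those maxima at vertices; \Cref{l:same} and \Cref{c:same} then show $\Safe$ is unchanged when the negatives are replaced by $\Verts(\Safe)$; and \Cref{l:connecting} identifies $\Verts(\Safe(\Sp,\Sm))=\CP(\Sp)$. Your alternative suggestion (identify $\CP(\Ap)$ with the $\Ap$-facing vertices and invoke the vertices-plus-recession-cone decomposition) is essentially this route, but it presupposes precisely the two facts left unproved: that $\Safe$ is a line-free polyhedron (the step requiring Weyl's theorem) and that its vertices are exactly the critical points.
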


We have already argued that $\Leak(\Ap,\CP(\Ap)) \supset \Ap \cup
\CP(\Ap)$, it suffices to show that no other points in $x \in S
\setminus \Ap \setminus \CP(\Ap)$ are in $\Leak(\Ap,\CP(\Ap))$.  To do
so, we identify from $\Ap$ a polyhedron $S'$ that contains all of $S
\setminus \Ap$ and show that (a) its vertices $V'$ are exactly
$\CP(\Ap)$ and (b) $\Leak(\Ap,V') = \Ap + V'$.  Specifically, when the
vertices of this polyhedron are disclosed, the other negative points, which are
all within the polyhedron, are not leaked.

The polyhedron, $\Safe(\Ap,\Am)$ is defined as follows.  Denote the
maximum in direction $d \in \reals^n$ in a set $X$ by $\max_d(X) =
\max_{x \in X} d \cdot x$ and, respectively, the minimum by
$\min_d(X)$.
Denote the (weakly) separating directions (See \Cref{f:safe}) for
linearly separable points $\Ap$ and $\Am$ by $\Dir(\Ap,\Am)$ defined
as
\begin{equation}\label{eq:dir}
    \Dir(\Ap,\Am) = \{d \in \reals^n : \max\nolimits_d(\Am) \leq \min\nolimits_d(\Ap)\}.
\end{equation}
Define the convex subspace of points (a polyhedron) that would not be leaked by disclosing $\Ap \sqcup \Am$ as {\em safe points} (see \Cref{f:safe}) and denote this subspace by
\begin{equation}\label{eq:safe}
\Safe(\Ap,\Am) = \{x \in \reals^n : \forall d \in \Dir(\Ap,\Am), d \cdot x \leq \max\nolimits_d(\Am)\}.
\end{equation}
I.e., a point is safe if in all separating directions there is a
negative point disclosed in $\Am$ that is at least as big (in this direction).

In what follows $\Verts(X)$ denotes the vertices of convex set $X$;
these are the points that are unique maximizers in any direction.  The
following lemma extends the fundamental theorem of linear programming
to polyhedra like $\Safe(\Ap,\Am)$ that are defined by the separating
hyperplanes between finite point sets.  It shows that, if it is finite,
the optimal point in some direction is attained at a vertex.  Recall
that the definition of separable for $\Ap \sqcup \Am$ is strict.  This
strictness is important.  For example if all points in $\Ap$ and $\Am$
lie on the same hyperplane then $\Ap$ is weakly on one side and $\Am$
is weakly on the other, in a sense, they are weakly separated.  In
this case, $\Safe(\Ap,\Am)$ is a halfspace and halfspaces have no
vertices.  Moreover, if $\Sp \sqcup \Sm$ are on the same hyperplane
then there are no critical points $\CP(\Sp) = \emptyset$.  Thus, the
strictness of separation will play an important role in the proof of
the main theorem.

\begin{lemma}
  \label{f:vertex} For separable $\Ap \sqcup \Am$ and $S'
  = \Safe(\Ap,\Am)$, if direction $d\in \reals^n$ has a finite
  optimizer in $S'$ then $d$ is optimized in $S'$ at a vertex
  $v \in \Verts(S')$.
\end{lemma}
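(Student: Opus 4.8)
The plan is to show that $S' = \Safe(\Ap,\Am)$ is an ordinary finitely-constrained polyhedron that is moreover line-free (pointed), after which the classical fundamental theorem of linear programming applies essentially verbatim. The genuine difficulty is that $\Safe(\Ap,\Am)$ is defined in \eqref{eq:safe} by infinitely many constraints, one for each $d$ in the cone $\Dir(\Ap,\Am)$, and, because $d \mapsto \max\nolimits_d(\Am)$ is only subadditive (it is the support function of $\conv(\Am)$), not linear, in $d$, one cannot simply restrict to the finitely many extreme rays of $\Dir(\Ap,\Am)$ and recover the same set: the extreme-ray inequalities are strictly weaker than the full family. So the crux is to re-express $\Safe(\Ap,\Am)$ as a Minkowski sum of finitely generated pieces.

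First I would record that $\Dir(\Ap,\Am)$ is a polyhedral cone: it is cut out by the homogeneous inequalities $\inner{d}{\xp - \xm} \geq 0$ over the finitely many pairs $(\xp,\xm) \in \Ap \times \Am$, hence a finite intersection of halfspaces through the origin and thus finitely generated (Minkowski--Weyl). Next, the key identity I would prove is
$$\Safe(\Ap,\Am) = \conv(\Am) + \Dir(\Ap,\Am)^{\circ},$$
where $\Dir^{\circ} = \{r : \inner{d}{r} \leq 0 \text{ for all } d \in \Dir\}$ is the polar cone. The inclusion $\supseteq$ is immediate from the definitions. For $\subseteq$, I would argue by contradiction via separation: $\conv(\Am) + \Dir^\circ$ is closed (a compact set plus a closed cone) and convex, so a point $x$ outside it is strictly separated by some $d_0$; because $\Dir^\circ$ is a cone the separating functional must be nonpositive on it, forcing $d_0 \in (\Dir^\circ)^\circ = \Dir$, while the separating value equals $\max\nolimits_{d_0}(\Am)$, exhibiting a direction in $\Dir$ that witnesses $x \notin \Safe(\Ap,\Am)$. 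Since $\conv(\Am)$ is a polytope and $\Dir^\circ$ is a polyhedral cone, this identity shows $S'$ is a genuine polyhedron with recession cone $\recess S' = \Dir^\circ$.

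It remains to verify that $S'$ is line-free, i.e.\ that it has vertices at all. The lineality space of $S'$ equals $\Dir^\circ \cap (-\Dir^\circ) = (\mathrm{span}\,\Dir)^{\perp}$, so it suffices to show $\Dir(\Ap,\Am)$ spans $\reals^n$. This is exactly where strict separability enters (as the excerpt already flags): a strictly separating direction $d_0$ satisfies $\max\nolimits_{d_0}(\Am) < \min\nolimits_{d_0}(\Ap)$, and by continuity of $\max\nolimits_d(\cdot)$ and $\min\nolimits_d(\cdot)$ in $d$ this strict inequality persists on a neighborhood of $d_0$, so $d_0 \in \mathrm{int}\,\Dir(\Ap,\Am)$; a cone with nonempty interior spans $\reals^n$. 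Hence the lineality space is $\{0\}$ and $S'$ is pointed.

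Finally I would invoke the fundamental theorem of linear programming for pointed polyhedra: a nonempty line-free polyhedron has at least one vertex, and any linear objective bounded above on it (equivalently, one with a finite optimizer, since a polyhedron attains any finite supremum of a linear function) attains its maximum at a vertex; by the intrinsic description of $\Verts(S')$ as the unique maximizers of some direction, this is a vertex in the sense of the lemma. I expect the first step to be the main obstacle: converting the infinite constraint description of \eqref{eq:safe} into the finite Minkowski-sum form and thereby certifying that $S'$ is an honest polyhedron. Once $S'$ is known to be a pointed polyhedron with recession cone $\Dir^\circ$, the remaining conclusion is standard.
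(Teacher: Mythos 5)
Your proof is correct, but it takes a genuinely different and more self-contained route than the paper's. The paper's own proof of this lemma is short: it invokes the fundamental theorem of linear programming for polyhedra containing no line (citing Theorems 2.6 and 2.7 of Bertsimas--Tsitsiklis) and then proves line-freeness geometrically, by rotating a strictly separating hyperplane---since $\Ap$ and $\Am$ are finite, their convex hulls are bounded and a hyperplane separating them with margin can be rotated slightly while still separating; any given line fails to be parallel to some such rotation, hence crosses it, hence cannot lie inside $\Safe(\Ap,\Am)$. Notably, the paper's proof at this point simply treats $\Safe(\Ap,\Am)$ as a polyhedron; the issue you correctly flag as the crux---that the defining family \eqref{eq:safe} is infinite and the right-hand side $\max_d(\Am)$ is only a support function, so restricting to extreme rays of $\Dir(\Ap,\Am)$ loses constraints---is acknowledged and repaired by the paper only later, inside the proof of \Cref{l:finitemax}, via a different finite description: there $\Dir(\Ap,\Am)$ is partitioned into subcones $\Dir_i$ of directions whose maximum over $\Am$ is attained at $a_i$, each finitely generated, and $\Safe = \bigcap_i \Safe_i$ with each $\Safe_i$ cut out by the finitely many generators of $\Dir_i$. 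Your alternative, the identity $\Safe(\Ap,\Am) = \conv(\Am) + \Dir(\Ap,\Am)^{\circ}$ proved by closedness, strict separation, and the bipolar theorem, is cleaner and buys more: it exhibits $\Safe$ directly in Minkowski--Weyl resolution form, so polyhedrality is immediate, the recession cone is identified as $\Dir^{\circ}$, and pointedness follows from $\mathrm{int}\,\Dir \neq \emptyset$ (which strict separability supplies); moreover it yields \Cref{l:finitemax} essentially for free, since a linear objective $d$ is bounded over $\conv(\Am)+\Dir^{\circ}$ exactly when $d \in (\Dir^{\circ})^{\circ} = \Dir(\Ap,\Am)$. The two line-freeness arguments exploit the same strictness of separation, just packaged differently: the paper rotates a hyperplane to kill each candidate line, while you compute the lineality space as $(\mathrm{span}\,\Dir)^{\perp}$ and observe it is trivial.
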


\begin{proof}
The proof follows because $\Safe(\Ap,\Am)$ does not contain a line (of
infinite length) and the fundamental theorem of linear programming which states that for
polyhedra that do not contain any line, every direction with a finite
optimizer is optimized at a vertex~\citep[see Theorems 2.6 and 2.7 in][]{bertsimas-LPbook}.

Suppose the convex hulls of $\Ap$ and $\Am$ are strictly separated at distance $2\epsilon >
0$ apart and consider a hyperplane $h$ that separates them with
distance $\epsilon$ from each.  Because $\Ap$ and $\Am$ are both
finite, their convex hulls are bounded and it is possible to rotate
$h$ small amount in any direction while still separating $\Ap$ and
$\Am$.  A line intersects a hyperplane if and only if the line and
hyperplane are \added{not} parallel.  For any line, one of the small rotations of
$h$ is not parallel and, thus, is intersecting.  This line,
therefore, does not lie completely within $\Safe(\Ap,\Am)$.
\end{proof}

We now observe that $\Dir$ and $\Safe$ behave the same way on second
parameter given by any of $\Am$, $S' = \Safe(\Ap,\Am)$, and $\Verts(S')$.

\begin{lemma} Let $S' = \Safe(\Ap,\Am)$, then
  \label{l:same} \begin{itemize} \item $\Dir(\Ap,\Am) = \Dir(\Ap,S')
  = \Dir(\Ap,\Verts(S'))$, and

  \item 
  $\max_d(\Am) = \max_d(S') = \max_d(\Verts(S'))$
    for all directions $d \in \Dir(\Ap,\Am)$.
  \end{itemize}
\end{lemma}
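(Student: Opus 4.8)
The plan is to prove the two bullets together: first extract the "easy" inclusions and inequalities from set containment, then upgrade them to equalities using the definition of $\Safe$, \Cref{f:vertex}, and — for the one genuinely hard direction — a cone-duality argument that is exactly where strictness of separation is used.

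First I would record the containments $\Am \subseteq S'$ and $\Verts(S') \subseteq S'$, which hold directly from the definitions, and note that $\Dir(\Ap,\cdot)$ is antitone in its second argument (enlarging the set only adds constraints on separating directions). This gives $\Dir(\Ap,S') \subseteq \Dir(\Ap,\Am)$ and $\Dir(\Ap,S') \subseteq \Dir(\Ap,\Verts(S'))$ for free. For the max bullet, fix $d \in \Dir(\Ap,\Am)$: since $\Am \subseteq S'$ we have $\max_d(\Am) \le \max_d(S')$, while the defining inequality of $\Safe$ applied to this $d$ gives $\inner{d}{x} \le \max_d(\Am)$ for every $x \in S'$, hence $\max_d(S') \le \max_d(\Am)$. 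So $\max_d(S') = \max_d(\Am)$ is finite, and \Cref{f:vertex} forces the optimum to be attained at a vertex, giving $\max_d(S') = \max_d(\Verts(S'))$. These equalities yield $\Dir(\Ap,\Am) \subseteq \Dir(\Ap,S')$ and $\Dir(\Ap,\Am) \subseteq \Dir(\Ap,\Verts(S'))$, which together with the antitone inclusions settle $\Dir(\Ap,\Am) = \Dir(\Ap,S')$ completely.

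The one remaining and substantive inclusion is $\Dir(\Ap,\Verts(S')) \subseteq \Dir(\Ap,\Am)$. I would reformulate $\Dir$ as a polar cone: $\Dir(\Ap,X) = \{d : \inner{d}{x-p} \le 0 \ \forall x \in X,\ p \in \Ap\}$ is the polar of $\cone(X-\Ap)$, so it suffices to prove $\cone(\Am-\Ap) \subseteq \overline{\cone}(\Verts(S')-\Ap)$. From its definition, $S' = \Safe(\Ap,\Am)$ is an intersection of halfspaces $\{\inner{d}{x} \le \max_d(\Am)\}$ with normals $d \in \Dir(\Ap,\Am)$, so its recession cone satisfies $\recess(S') = \Dir(\Ap,\Am)^\circ = \cone(\Am-\Ap)$, a finitely generated (hence closed) cone whose extreme rays are positive multiples of generators $a-p$ with $a \in \Am,\ p \in \Ap$. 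Because $S'$ contains no line (the property established in the proof of \Cref{f:vertex}), Minkowski--Weyl gives $S' = \conv(\Verts(S')) + \recess(S')$; writing $a = w + r$ with $w \in \conv(\Verts(S'))$ and $r \in \recess(S')$, and using $\sum_i \lambda_i = 1$ in $w = \sum_i \lambda_i v_i$, we get $a - p = (w-p) + r$ with $w-p \in \cone(\Verts(S')-\Ap)$ and $r \in \recess(S')$.

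Now since $a-p$ generates an extreme ray of the pointed cone $\recess(S')$ and is exhibited as a sum of two of its elements, both summands must be nonnegative multiples of that ray; in particular $w - p = \mu\,(a-p)$ for some $\mu \ge 0$. Here strictness enters decisively: the strict separator of $\Ap$ and $\Am$ lies in $\Dir(\Ap,\Am)$ and, by the max equality of the first paragraph, strictly separates $S' \supseteq \conv(\Verts(S'))$ from $\Ap$, so $w \ne p$ and therefore $\mu > 0$. Hence $a - p = \mu^{-1}(w-p) \in \cone(\Verts(S')-\Ap)$, which gives $\cone(\Am-\Ap) \subseteq \overline{\cone}(\Verts(S')-\Ap)$ and the missing inclusion. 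I expect this extreme-ray step — and specifically the need to invoke strict separation to rule out $w = p$ — to be the main obstacle; everything else is bookkeeping with monotonicity, the $\Safe$ inequality, and \Cref{f:vertex}.
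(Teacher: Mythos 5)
Your first paragraph is essentially the paper's own proof: antitonicity of $\Dir(\Ap,\cdot)$ plus the defining inequalities of $\Safe$ give $\Dir(\Ap,\Am)=\Dir(\Ap,S')$ and the equalities $\max_d(\Am)=\max_d(S')=\max_d(\Verts(S'))$ for $d\in\Dir(\Ap,\Am)$ (with \Cref{f:vertex} supplying attainment at a vertex), and these also give $\Dir(\Ap,\Am)\subseteq\Dir(\Ap,\Verts(S'))$. Where you diverge is the reverse inclusion $\Dir(\Ap,\Verts(S'))\subseteq\Dir(\Ap,\Am)$: the paper dispatches it with a one-line appeal to \Cref{f:vertex}, while you build a polar-cone argument. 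Your instinct that this direction needs real work is sound---\Cref{f:vertex} by itself does not exclude a direction that is bounded on $\Verts(S')$ yet unbounded on $S'$, which is exactly the kind of issue the paper's later \Cref{l:finitemax} is introduced to handle. However, as written your argument has two gaps.

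First, ``Minkowski--Weyl gives $S'=\conv(\Verts(S'))+\recess(S')$'' presupposes that $S'$ is a polyhedron. $\Safe(\Ap,\Am)$ is defined by infinitely many inequalities, and its polyhedrality is precisely the nontrivial point the paper proves inside \Cref{l:finitemax} (the footnote there explains why it does not follow merely from $\Dir(\Ap,\Am)$ being finitely generated: the right-hand sides $\max_d(\Am)$ are not linear in $d$). Line-freeness alone gives only the Klee/Rockafellar decomposition $S'=\conv(\mathrm{ext}(S'))+\recess(S')$ in terms of \emph{extreme} points, whereas the paper's $\Verts(S')$ are \emph{exposed} points (unique maximizers); for a non-polyhedral closed convex set these can differ, and since $\Verts(S')\subseteq\mathrm{ext}(S')$, the ext-version of your final inclusion, $\Dir(\Ap,\mathrm{ext}(S'))\subseteq\Dir(\Ap,\Am)$, does not bound $\Dir(\Ap,\Verts(S'))$. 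You can repair this either by proving polyhedrality first (a short route: show $S'=\conv(\Am)+\cone(\Am-\Ap)$ by separating any point outside that polyhedron from it and noting the separating direction must lie in $\Dir(\Ap,\Am)$), or by invoking Straszewicz's theorem so that $\conv(\Verts(S'))$ is replaced by its closure, which your use of $\overline{\cone}$ already tolerates. Second, you apply ``since $a-p$ generates an extreme ray of $\recess(S')$'' to an \emph{arbitrary} generator, but only some of the differences $a-p$ span extreme rays (one such difference can lie in the interior of the cone spanned by two others). This one is easily fixed: $\recess(S')$ is pointed (it contains no line) and closed, hence it is the conic hull of its extreme rays, so it suffices to run your argument on extreme-ray generators only, for which it is correct---including the two genuinely useful observations that $S'-\Ap\subseteq\recess(S')$ and that strict separation rules out $w=p$.
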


\begin{proof}
  For the first bullet: For the first equality, all points in $\Am$
  are safe so $S' \supset \Am$.  $\Dir(\Ap,\cdot)$ is bigger when
  its second argument is smaller. But every $d \in \Dir(\Ap,\Am)$ is
  also in $\Dir(\Ap,S')$ as the only points we add in $S'$ are smaller
  than the largest point in $\Ap$ in direction $d$.  For the second
  equality, \Cref{f:vertex} implies $\Dir(\Ap,S')
  = \Dir(\Ap,\Verts(S'))$.

  For the second bullet: By the definition of safe, $\max_d(\Am) =
  \max_d(S')$.  Specifically, all points we add are worse than points
  in $\Am$ in all relevant directions, but points in $\Am$ are also
  contained in $S'$ so the maximum values over these sets in direction
  $d$ direction are equal. \Cref{f:vertex} implies $\max_d(S') =
  \max_d(\Verts(S'))$.
\end{proof}

\begin{corollary}
  \label{c:same}
  Let $S' = \Safe(\Ap,\Am)$, then 
  $$\Safe(\Ap,\Am) = \Safe(\Ap,S') =
  \Safe(\Ap,\Verts(S')).$$
\end{corollary}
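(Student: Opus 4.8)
The plan is to read off the corollary directly from \Cref{l:same} by observing that the operator $\Safe(\Ap,\cdot)$ depends on its second argument only through two derived quantities: the set of separating directions $\Dir(\Ap,\cdot)$ and, for each such direction $d$, the threshold $\max_d(\cdot)$. Unwinding the definition \eqref{eq:safe}, the set $\Safe(\Ap,B)$ is exactly the intersection of the halfspaces $\{x : d \cdot x \leq \max_d(B)\}$ taken over all $d \in \Dir(\Ap,B)$. Thus any two second arguments that agree both on the direction set and on the thresholds in every common direction produce identical collections of halfspace constraints, hence identical safe regions.

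First I would fix $S' = \Safe(\Ap,\Am)$ and apply the two bullets of \Cref{l:same} to the three candidate second arguments $\Am$, $S'$, and $\Verts(S')$. The first bullet gives the common direction set $\Dir(\Ap,\Am) = \Dir(\Ap,S') = \Dir(\Ap,\Verts(S'))$, and the second bullet gives $\max_d(\Am) = \max_d(S') = \max_d(\Verts(S'))$ for every $d$ in this set. Substituting these equalities into the halfspace description of $\Safe$ shows that all three of $\Safe(\Ap,\Am)$, $\Safe(\Ap,S')$, and $\Safe(\Ap,\Verts(S'))$ are cut out by literally the same constraints, which is the claim.

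The one technical point to check is well-definedness: the threshold comparison in \Cref{l:same} is asserted only for $d \in \Dir(\Ap,\Am)$, and for such $d$ we have $\max_d(S') \leq \min_d(\Ap) < \infty$, so each halfspace is a genuine finite constraint; for directions outside the common set no constraint is imposed in any of the three cases, so nothing is lost there. Beyond assembling these pieces there is no real obstacle — the substantive work has already been carried out in \Cref{f:vertex} and \Cref{l:same}, and the corollary is a one-line consequence of them.
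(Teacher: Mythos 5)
Your proposal is correct and follows the paper's own argument exactly: the corollary is read off from \Cref{l:same}, since the definition \eqref{eq:safe} of $\Safe(\Ap,\cdot)$ depends on its second argument only through $\Dir(\Ap,\cdot)$ and the thresholds $\max_d(\cdot)$, and both are shown equal across $\Am$, $S'$, and $\Verts(S')$. Your additional remark on well-definedness of the thresholds is a harmless (and reasonable) elaboration of the same one-line argument.
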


\begin{proof}
  By \Cref{l:same} all the terms in the definition of $\Safe$ that depend on the second parameter are the same.
\end{proof}

We conclude that if the vertices of $\Safe$ are disclosed, then no other negative points are leaked.

\begin{lemma}
\label{c:leak-verts}
For $S' = \Safe(\Ap,S \setminus \Ap)$ and $V' = \Verts(S')$, then 
$$
\Leak(\Ap,V') = \Ap \cup V'. 
$$
\end{lemma}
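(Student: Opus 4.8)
The plan is to prove the two inclusions of $\Leak(\Ap,V') = \Ap \cup V'$ separately, with essentially all the work going into $\Leak(\Ap,V') \subseteq \Ap \cup V'$; the key tool is the ``easy half'' of the leak/safe duality that the definition of $\Safe$ anticipates, combined with \Cref{c:same}.

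For the inclusion $\Leak(\Ap,V') \supseteq \Ap \cup V'$: by definition $\Leak(\Ap,\Am) \supseteq \Am$, so $V' \subseteq \Leak(\Ap,V')$ is immediate. For $\Ap$, I first note that $\Ap \sqcup V'$ is strictly separable (justified in the last paragraph), so $\SH(\Ap,V') \neq \emptyset$; any consistent $h \in \SH(\Ap,V')$ labels every point of $\Ap \subseteq S$ as $+1$, placing $\Ap$ inside $\Leak(\Ap,V')$.

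The core is the reverse inclusion, and here I would first establish the one easy direction of duality: \emph{if a point $x$ is classified $+1$ by some $h \in \SH(\Ap,\Am)$, then $x \notin \Safe(\Ap,\Am)$.} Writing $h(z) = \text{sign}(d \cdot z - c)$, consistency of $h$ gives $\max_d(\Am) < c < \min_d(\Ap)$, and $h(x)=+1$ gives $d \cdot x > c$; hence $d \in \Dir(\Ap,\Am)$ (the weak inequality $\max_d(\Am)\le\min_d(\Ap)$ holds) while $d \cdot x > \max_d(\Am)$, so $x$ violates the safety constraint in direction $d$ and is not safe. Now take any $x \in \Leak(\Ap,V')$ with $x \notin V'$; then $x \in S$ and $x$ is leaked, so by the duality direction $x \notin \Safe(\Ap,V')$. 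By \Cref{c:same}, $\Safe(\Ap,V') = \Safe(\Ap, S \setminus \Ap) = S'$, and since every disclosed negative is safe we have $S \setminus \Ap \subseteq \Safe(\Ap,S \setminus \Ap) = \Safe(\Ap,V')$. As $x \notin \Safe(\Ap,V')$, this forces $x \notin S \setminus \Ap$, i.e.\ $x \in \Ap$. Thus $\Leak(\Ap,V') \subseteq \Ap \cup V'$.

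It remains to justify the strict separability of $\Ap \sqcup V'$ invoked above, which I expect to be the main technical point, since $V'$ consists of abstract vertices of $S'$ rather than genuine data points. Because the critical-points construction assumes $\Ap$ is strictly separable from $S \setminus \Ap$, there is a direction $e$ with $\max_e(S \setminus \Ap) < \min_e(\Ap)$, and in particular $e \in \Dir(\Ap, S \setminus \Ap)$. Every vertex $v \in V' \subseteq S'$ obeys the safety constraint for $e$, namely $e \cdot v \leq \max_e(S \setminus \Ap) < \min_e(\Ap)$, so $\max_e(V') < \min_e(\Ap)$ and $e$ strictly separates $\Ap$ from $V'$. (This same $e$ shows $\Ap \cap S' = \emptyset$, which together with $S \setminus \Ap \subseteq S'$ in fact yields $S \setminus S' = \Ap$ and gives an alternative route to the conclusion.) The two subtleties to watch are the strict-versus-weak inequality interface — $\Dir$ uses a weak inequality while classifiers separate strictly, so one must keep the extracted $c$ strictly between the relevant extrema — and the degenerate case $V' = \emptyset$, which is precisely excluded because strict separability of $\Ap \sqcup (S \setminus \Ap)$ ensures, via \Cref{f:vertex}, that $S'$ contains no line and hence (when nonempty) has a vertex.
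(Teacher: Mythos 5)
Your proof is correct and takes essentially the same route as the paper's: the inclusion $\Ap \cup V' \subseteq \Leak(\Ap,V')$ holds by definition, and the reverse follows from \Cref{c:same} combined with the fact that a point in $\Safe(\Ap,V')$ cannot be labeled positive by any consistent classifier. The paper compresses this into three lines; you usefully make explicit the leaked-implies-not-safe duality and the strict separability of $\Ap \sqcup V'$ (needed so that $\SH(\Ap,V') \neq \emptyset$), both of which the paper leaves implicit.
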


\begin{proof}
  $\Ap \cup V'$ are leaked by definition.  By \Cref{c:same}, $\Safe(\Ap,V') = \Safe(\Ap,S \setminus \Ap)$.  By the definition of $\Safe$, no other points in $\Safe(\Ap,S \setminus \Ap) \supset S \setminus \Ap$ are leaked.  
\end{proof}

\nadded{

With a view towards characterizing the vertices of Safe, the following lemma shows that all directions with finite maximizers within $\Safe$ are contained in $\Dir$. In what follows $\cone(X)$ denotes the set of points obtained by taking non-negative linear combinations of points in $X$. A set $X$ is a convex cone if and only if $\cone(X)=X$; it is said to be finitely generated if there exists a finite set of points $v_1, \dots, v_m$ such that $X=\cone(\{v_1, \dots, v_m\})$.  

\begin{lemma}\label{l:finitemax}
  For any $\Ap, \Am$, $\Dir(\Ap, \Am)$ is a convex cone. Moreover for any direction $d \in \R^n$,  $\max_{x \in \Safe(\Ap, \Am)} d \cdot x$ is finite if and only if $d \in \Dir(\Ap, \Am)$. 
\end{lemma}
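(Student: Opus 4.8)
The plan is to first rewrite $\Dir(\Ap,\Am)$ in a manifestly polyhedral form and then reduce the finiteness question to the bipolar theorem for cones. For the cone claim I would observe that the single inequality $\max_d(\Am)\le\min_d(\Ap)$ is equivalent to the finite family of inequalities $d\cdot(a-p)\le 0$ ranging over all $a\in\Am$ and $p\in\Ap$: indeed $\max_{a}d\cdot a\le\min_{p}d\cdot p$ holds exactly when every $\Am$-value is dominated by every $\Ap$-value. Hence
\[
\Dir(\Ap,\Am)=\bigcap_{a\in\Am,\,p\in\Ap}\{d\in\reals^n: d\cdot(a-p)\le 0\},
\]
an intersection of finitely many homogeneous half-spaces. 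This presentation immediately gives that $\Dir(\Ap,\Am)$ is a closed convex cone (in fact polyhedral, hence finitely generated): it is closed under addition and nonnegative scaling because each defining inequality is, and it is closed because it is a finite intersection of closed sets. Alternatively one checks closure under $+$ and nonnegative scaling directly using subadditivity of $d\mapsto\max_d(\Am)$ and superadditivity of $d\mapsto\min_d(\Ap)$.

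For the finiteness characterization, the easy direction is ($\Leftarrow$): if $d\in\Dir(\Ap,\Am)$ then by the very definition of $\Safe$ every $x\in\Safe(\Ap,\Am)$ satisfies $d\cdot x\le\max_d(\Am)$, and $\max_d(\Am)$ is finite since $\Am$ is a finite set; thus the maximum is finite (and in fact equals $\max_d(\Am)$, attained since $\Am\subseteq\Safe(\Ap,\Am)$). The content is the converse, which I would prove in contrapositive form: if $d\notin\Dir(\Ap,\Am)$ then the supremum is $+\infty$. The mechanism is to exhibit a recession direction of $\Safe$ along which $d$ strictly increases. Writing $K^{\circ}=\{r:d'\cdot r\le 0\ \text{for all } d'\in K\}$ for the polar cone, I would first check directly that every $r\in\Dir(\Ap,\Am)^{\circ}$ is a recession direction of $\Safe(\Ap,\Am)$: for $x\in\Safe$ and $t\ge 0$, each defining constraint gives $d'\cdot(x+tr)=d'\cdot x+t\,(d'\cdot r)\le d'\cdot x\le\max_{d'}(\Am)$, so $x+tr\in\Safe$.

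It then remains to produce $r\in\Dir^{\circ}$ with $d\cdot r>0$, and this is exactly where I would invoke that $\Dir(\Ap,\Am)$ is a closed convex cone: by the bipolar theorem $\Dir^{\circ\circ}=\Dir$, so $d\notin\Dir$ means $d\notin\Dir^{\circ\circ}=\{d:d\cdot r\le 0\ \forall r\in\Dir^{\circ}\}$, i.e.\ there is some $r\in\Dir^{\circ}$ with $d\cdot r>0$. Picking any base point $x_0\in\Am\subseteq\Safe$, the ray $x_0+tr$ stays in $\Safe$ by the previous paragraph while $d\cdot(x_0+tr)=d\cdot x_0+t\,(d\cdot r)\to+\infty$, establishing that the maximum is infinite. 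The main obstacle — and the reason I route through the bipolar theorem rather than the fundamental theorem of linear programming — is that $\Safe(\Ap,\Am)$ is presented as an intersection of \emph{infinitely many} half-spaces (one per $d'\in\Dir$), so one cannot naively treat it as a finitely-constrained polyhedron and apply LP duality; replacing those constraints by finitely many generator constraints fails because sublinearity of $d'\mapsto\max_{d'}(\Am)$ enlarges the set in the wrong direction. Working instead with the finitely generated (hence closed) cone $\Dir$ and its polar sidesteps this, since the recession-cone computation above only ever uses the inclusion $\Dir^{\circ}\subseteq\mathrm{rec}(\Safe)$, which holds for arbitrary intersections of half-spaces.
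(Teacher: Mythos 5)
Your proof is correct, and it takes a genuinely different route from the paper's for the hard direction. Both arguments start identically: rewrite $\Dir(\Ap,\Am)$ as the finite intersection of homogeneous half-spaces $\{d : d\cdot(a-p)\le 0\}$ over $a\in\Am,\,p\in\Ap$ (giving the convex-cone claim), and both dispose of the easy implication via the defining constraints of $\Safe$. For the converse, the paper's proof first establishes that $\Safe(\Ap,\Am)$ is itself a polyhedron: it partitions the constraint directions into subcones $\Dir_i$ (directions whose maximum over $\Am$ is attained at $a_i$), invokes Weyl's theorem to obtain finitely many generators $v_{ij}$ of each $\Dir_i$, shows the generator constraints suffice to cut out $\Safe$, and then applies LP duality (boundedness of the primal iff feasibility of the dual) to conclude $d \in \cone(\{v_{ij}\}) \subseteq \Dir(\Ap,\Am)$. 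You instead argue by contraposition and never need $\Safe$ to be polyhedral: since $\Dir$ is a closed convex cone, the bipolar theorem produces, for any $d\notin\Dir$, some $r\in\Dir^{\circ}$ with $d\cdot r>0$, and your direct computation shows every $r\in\Dir^{\circ}$ is a recession direction of $\Safe$ even though $\Safe$ is presented with infinitely many constraints; a ray from any point of $\Am$ then witnesses unboundedness. Your route is shorter and sidesteps exactly the subtlety the paper flags (the right-hand sides $\max_{d}(\Am)$ are sublinear, not linear, in $d$, so generator constraints alone do not obviously recover $\Safe$). What the paper's longer route buys is the explicit finite inequality description of $\Safe$, which is of independent value: the appeal to the fundamental theorem of linear programming in \Cref{f:vertex} implicitly treats $\Safe$ as a polyhedron, and the paper's proof of the present lemma is what supplies that fact. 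One minor caveat shared by both arguments: they tacitly assume $\Am\neq\emptyset$ (you need the base point $x_0\in\Am$ for the ray; the paper needs every constraint direction to attain its maximum at some $a_i$), which is the only case of interest.
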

\begin{proof}
  We first observe that $\Dir(\Ap, \Am)$ defined in \eqref{eq:dir} can equivalently be described as 
  \begin{equation} \label{eq:cone:desc}
      \Dir(\Ap, \Am) = \big\{ d: \forall a_- \in \Am, a_+ \in \Ap,~ d \cdot (a_- - a_+) \le 0 \big\},
  \end{equation}
  which corresponds to the solution set of a system of homogenous linear inequalities. Hence $\Dir(\Ap, \Am)$ is a convex cone since it is closed under non-negative combinations. 

  We now prove the second part. One direction is easy: if $ d \in \Dir(\Ap, \Am)$, then by the definition of $\Safe$, we have $\max_{x \in \Safe(\Ap, \Am)} \le \max_d(\Am)$ which is bounded. 
  
  The other direction is more challenging and involves proving that the every direction with a finite maximizer over $\Safe(\Ap,\Am)$ is in $\Dir(\Ap,\Am)$. We would like to use linear programming (LP) duality to prove that every direction is in the cone $\Dir(\Ap,\Am)$. However, it is not clear that $\Safe$ is a polyhedron to apply LP duality i.e., described by a finite set of linear inequalities. Note that from \eqref{eq:cone:desc}, we see that $\Dir(\Ap, \Am)$ is described by a finite number of constraints. Hence by Weyl's theorem on polyhedral cones (see \cite{Schrijverbook}), $\Dir(\Ap, \Am)$ is also a finitely generated cone. However this does not suffice since the constraint for each direction is of the form $d \cdot x \le \max_d(\Am)$. \footnote{For example even if $d = v_1+v_2$, 
  the RHS of the constraint $\max_d(\Am)$ could be smaller than $\max_{v_1}(\Am)+\max_{v_2}(\Am)$; hence some of the constraints that define $\Safe$ are not necessarily implied by constraints on just the generators of the cone $\Dir(\Ap,\Am)$.}

  We first show that $\Safe$ can indeed be described by a finite number of linear inequalities, and then use LP duality to complete the argument. 
  Let $\ell = |\Am|$ and $\Am=\{a_1, a_2, \dots, a_\ell \}$. We define convex sets $\Dir_i$ and $\Safe_i$ (here we suppress the arguments $\Ap, \Am$ for easier notation) as follows :
  \begin{align}
      \forall i \in [\ell], ~ \Dir_i & \coloneqq \{ d \in \Dir(\Ap, \Am) : d \cdot a_i=\max_{d}(\Am) \}.\\
      \Safe_i &\coloneqq \Big\{ x : \forall d \in \Dir_i, ~ d \cdot x \le d \cdot a_i \Big\}  \\
      \text{Then, } \Safe(\Ap, \Am) &= \bigcap_{i \in [\ell]} \Safe_i. \label{eq:safes}
  \end{align}
  
We now show that each of the convex sets $\Safe_i$ (and hence $\Safe$) is polyhedral i.e., described by a finite set of constraints. 
For each $i \in [\ell]$, $\Dir_i$ is also a convex cone that is finitely generated. This is because $\Dir_i$ is described exactly by the finite set of linear constraints as $\Dir_i = \{ d : \forall j \in [\ell], ~ d \cdot (a_j - a_i) \le 0 \}$. 
  Hence by the Weyl theorem for polyhedral cone duality (see \cite{Schrijverbook}), there exists finite $r_i \in \mathbb{N}$ such that the set of vectors $v_{i1}, \dots, v_{ir_i}$ such that $\Dir_i=\cone(v_{i1}, \dots, v_{ir_i})$. Now we see that 
  \begin{equation} \label{eq:safei}
  \Safe_i = \{ x : \forall j \in [r_i], ~ v_{ij} \cdot x \le v_{ij} \cdot a_i\}. 
  \end{equation}
  The subset inclusion in \eqref{eq:safei} is obvious. The other direction just follows because $v_{i1}, \dots, v_{ir_i}$ generate the cone. This shows that for each $i \in [\ell]$, $\Safe_i$  is polyhedral i.e., described by a finite set of linear constraints. Hence from \eqref{eq:safes} we have 
  \begin{equation}\label{eq:safelp}
  \Safe(\Ap, \Am) = \{x : \forall i \in [\ell], \forall j \in [r_i], v_{ij} \cdot x \le b_{ij} \},
  \end{equation}
  which is described by a finite number of constraints $r_{\text{tot}} \coloneqq \sum_{i =1}^\ell r_i$. 
  
Finally, we now use linear programming duality to show that $d$ has a finite maximum over $\Safe$ if and only if $d \in \Dir(\Ap, \Am)$. Consider the linear program (LP) given by $\max_{x \in \R^n} d \cdot x$ such that $x$ satisfies the constraints in \eqref{eq:safelp}. By LP duality, this LP has a finite maximum (i.e., bounded) if and only if its dual LP is feasible i.e., there exists a non-negative vector $y \ge 0$ in $r_{\text{tot}}$ dimensions with such that 
$$ \sum_{i=1}^\ell \sum_{j=1}^{r_i} y_{ij} v_{ij} = c.$$
In other words, if $\max_d(\Safe(\Ap,\Am))$ is finite (bounded), then $d \in \cone(\{v_{ij}: i \in [\ell], j \in [r_i]\}) \subset \Dir(\Ap, \Am)$.   
\end{proof}

}

Now we show that the vertices of $\Safe$ are equal to the critical points.

\begin{lemma}
  \label{l:connecting}
  For any linearly separable set $\Sp \sqcup \Sm = S$, the critical points are the vertices of the safe points, i.e., $$\CP(\Sp) = \Verts(\Safe(\Sp,\Sm)).$$
\end{lemma}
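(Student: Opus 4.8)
The plan is to prove the two inclusions $\Verts(\Safe(\Sp,\Sm)) \subseteq \CP(\Sp)$ and $\CP(\Sp) \subseteq \Verts(\Safe(\Sp,\Sm))$ separately, writing $\Safe = \Safe(\Sp,\Sm)$ and $V' = \Verts(\Safe)$. Two facts are used throughout: $\Sm \subseteq \Safe$ (every disclosed negative point is safe, as observed in the proof of \Cref{l:same}), and $\Safe$ is a genuine polyhedron described by finitely many linear inequalities, as established in the proof of \Cref{l:finitemax}.

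For $V' \subseteq \CP(\Sp)$, I would start from a vertex $v \in V'$, i.e., the unique maximizer over $\Safe$ of some direction $d$. Since this maximum is finite, \Cref{l:finitemax} gives $d \in \Dir(\Sp,\Sm)$, and then \Cref{l:same} gives $\max_d(\Safe) = \max_d(\Sm)$. Because $\Sm \subseteq \Safe$ and $v$ is the \emph{unique} maximizer over $\Safe$, the $\Sm$-maximizer of $d$ must coincide with $v$, and the same uniqueness shows $d \cdot v > d \cdot q$ for every $q \in \Sm \setminus \{v\}$; in particular $v \in \Sm$. Choosing a threshold $c$ strictly between $\max_d(\Sm\setminus\{v\})$ and $d\cdot v$, the classifier $\text{sign}(d\cdot x - c)$ labels $v$ positive and every other negative point negative, while every $p \in \Sp$ stays positive because $d\cdot p \ge \min_d(\Sp) \ge \max_d(\Sm) = d\cdot v > c$ (here $d \in \Dir$ is exactly what forces the positives to remain above $v$). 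This exhibits a consistent classifier for $(\Sp, \Sm\setminus\{v\})$ that leaks $v$, so $v \in \CP(\Sp)$; note this step also records that $V' \subseteq \Sm$.

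For the reverse inclusion $\CP(\Sp) \subseteq V'$, I would argue by contradiction using monotonicity of $\Leak$ together with \Cref{c:leak-verts}. Suppose $x \in \CP(\Sp) \setminus V'$. Since $\CP(\Sp) \subseteq \Sm$ and, by the previous paragraph, $V' \subseteq \Sm$, we have $V' \subseteq \Sm \setminus \{x\}$. The leak operator is monotone decreasing in its negative argument (adding negative points only adds constraints on separating hyperplanes; this is the second claim in the proof of \Cref{thm:min}), so $\Leak(\Sp, \Sm \setminus \{x\}) \subseteq \Leak(\Sp, V')$. But $x$ being critical means precisely $x \in \Leak(\Sp, \Sm \setminus \{x\})$, while \Cref{c:leak-verts} gives $\Leak(\Sp, V') = \Sp \cup V'$. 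Thus $x \in \Sp \cup V'$, contradicting both $x \notin \Sp$ (it is a negative point) and $x \notin V'$.

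I expect the first inclusion to be the more delicate step: the crux is arguing that a direction witnessing a vertex of the (a priori only abstractly defined) set $\Safe$ must be a genuine separating direction whose optimum is attained at an actual negative data point, which is exactly where strictness of separation and \Cref{l:finitemax,l:same} do the work. The second inclusion is then short once \Cref{c:leak-verts} and monotonicity are in hand.
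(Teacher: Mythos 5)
Your proof is correct. The first inclusion, $\Verts(\Safe(\Sp,\Sm)) \subseteq \CP(\Sp)$, follows the paper's own route almost exactly: the paper likewise takes the direction witnessing the vertex, invokes \Cref{l:finitemax} to conclude it lies in $\Dir(\Sp,\Sm)$, uses the definition of $\Safe$ together with \Cref{l:same} to see that the maximum over $\Safe$ in that direction is attained at a point of $\Sm$, deduces from uniqueness that the vertex lies in $\Sm$, and reads off a consistent classifier leaking it; your explicit choice of threshold $c$ merely spells out what the paper leaves implicit. The second inclusion, $\CP(\Sp) \subseteq V'$ where $V' = \Verts(\Safe(\Sp,\Sm))$, is where you genuinely diverge. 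The paper argues constructively: given $x' \in \CP(\Sp)$, it combines (by convexity) the hyperplane separating $\Sp \cup \{x'\}$ from $\Sm \setminus \{x'\}$ with one separating $\Sp$ from $\Sm$ to produce a direction $d' \in \Dir(\Sp,\Sm)$ in which $x'$ is the unique maximizer of $\Sm$, and then promotes $x'$ to a vertex of $\Safe$. You instead argue by contradiction through \Cref{c:leak-verts}: if $x \in \CP(\Sp) \setminus V'$, then $V' \subseteq \Sm \setminus \{x\}$, so the classifier witnessing $x \in \Leak(\Sp,\Sm\setminus\{x\})$ is also consistent with $(\Sp, V')$, forcing $x \in \Leak(\Sp,V') = \Sp \cup V'$, which is impossible. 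This is not circular --- \Cref{c:leak-verts} rests only on \Cref{c:same}, \Cref{l:same} and \Cref{f:vertex}, none of which use \Cref{l:connecting} --- and it buys you something concrete: you never need to exhibit a direction in which the critical point is the \emph{unique} maximizer over the entire polyhedron $\Safe$ (as opposed to over $\Sm$), which is the one step the paper treats rather briskly; \Cref{c:leak-verts} absorbs that geometry. The paper's construction, in exchange, is self-contained and makes the separating direction explicit.

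One phrasing slip, worth fixing but not a gap: the set-level inclusion $\Leak(\Sp,\Sm\setminus\{x\}) \subseteq \Leak(\Sp,V')$ is false in general, since $\Leak(\Ap,\Am) \supseteq \Am$ by definition, and $\Sm\setminus\{x\}$ typically contains non-critical points lying outside $\Leak(\Sp,V') = \Sp \cup V'$. What is true --- and what the second claim in the proof of \Cref{thm:min} actually asserts --- is pointwise monotonicity for a point contained in neither negative set. Since your $x$ lies in neither $\Sm\setminus\{x\}$ nor $V'$, that pointwise statement is exactly what your argument uses, so the proof stands once the inclusion is restated for the single point $x$.
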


\begin{proof}
  Let $S' = \Safe(\Sp,\Sm)$ and $V' = \Verts(S')$. The proof follows from the following two statements that we establish:
  \begin{align}
     V' &\subseteq \Sm \label{eq:verts:1}\\    
      V' & \subseteq \CP(\Sp). \label{eq:verts:2}\\
      V' & \supseteq \CP(\Sp). \label{eq:verts:3}
  \end{align}
  
Consider any vertex $x' \in S'$, and let $d' \in \R^n$ be the direction that it uniquely maximizes within $S'$. 
\nsubtracted{
Since $\Safe$ is defined as the intersection of halfspaces (constraints on the extent in directions $\Dir$), duality implies that these are the only directions with finite maximizers:
$$ \max_{x \in S'} d' \cdot x \text{ is bounded } \Leftrightarrow d' \in \cone(\Dir(\Sp, \Sm)) = \Dir(\Sp, \Sm)    .
$$
Note that the last equality in the above equation follows since $\Dir(\Sp, \Sm)$ is closed under scaling and convex combinations. 
}
\nadded{From Lemma~\ref{l:finitemax} we have that $d' \in \Dir(\Sp, \Sm)$ since $d'$ has a finite maximizer in $\Safe(\Sp, \Sm)$. }
From the definition of $\Safe(\Sp, \Sm)$, for every direction $d \in \Dir(\Sp, \Sm)$ (and in particular $d'$), there exists an element of $\Sm$ that achieves $\max_{x \in S'} d \cdot x$. Hence $x' \in \Sm$ since $x'$ is the unique maximizer in $S'$. This establishes \eqref{eq:verts:1}. 

We now show $V' \subseteq \CP(\Sp)$. As before let $x' \in V'$ and $d'$ be a direction that it uniquely maximizes within $S' \supseteq \Sm$. Hence there is a linear classifier consistent with $\Sp \cup \{x'\}$ labeled positive, and $\Sm \setminus \{x'\}$ labeled negative. Hence $x' \in \CP(\Sp)$ as required for \eqref{eq:verts:2}. 

Finally to show \eqref{eq:verts:3}, suppose $x' \in \CP(\Sp)$. By definition, there is a linear classifier separating $\Sp \cup \{ x' \}$ (as positives) and $\Sm \setminus \{x'\}$ (as negatives). Moreover, $\Sp$ and $\Sm$ are also separable (in particular it labels $x' \in \Sm$ as a negative example). Hence by convexity, there exists a direction $d'$ such that $x'$ is the unique maximizer among the $\Sm$, and $d' \cdot x' < \min_d(\Sp)$. Hence $d' \in \Dir(\Sp, \Sm)$ and from \eqref{eq:verts:1}, we have that $x'$ is a unique maximizer in $S'$ of $d'$. Hence \eqref{eq:verts:3}. This concludes the proof.
\end{proof}


\begin{proof}[Proof of \Cref{thm:verts}]
  By \Cref{l:connecting}, the critical points of $\CP(\Sp)$ are equal
  to the vertices of $\Safe(\Sp,\Sm)$.  Plugging this equivalence into
  \Cref{c:leak-verts}, we have $\Leak(\Sp,\CP(\Sp)) = \Sp \cup
  \CP(\Sp)$.
\end{proof}

\subsection{Truthful Protocols}
\label{s:truth}

In protocols for e-discovery Alice desires to (a) hide positive data
points and (b) reduce the disclosure of negative data points.  In a
correct protocol, all positive data points are revealed, thus, Alice
faces only the problem of reducing the discosure of negative data
points.  Following the standard framework from mechanism design in
economics and computer science, we define the protocol properties of
direct and truthful and provide a revelation principle.  For this
discussion we view Alice's interaction in the protocol.  Note that all
correct protocols disclose all of the positive points $\Sp$; thus,
minimizing the number of points disclosed in a correct protocol is
equivalent to minimizing the number of negative points disclosed.

\begin{definition}
  Given a known set of points $S$, a direct protocol
  $\proto : 2^S \times 2^S \to 2^S$ maps the sets of alleged positives
  (of Alice) and true positives (as can be verified by Bob) to a set
  of disclosed points (to Bob).
\end{definition}

\begin{definition}
  A direct protocol is {\em truthful} if for all $\Sp
  \sqcup \Sm = S$, Alice's optimal strategy is to truthfully report
  $\Sp$.
  \end{definition}

\begin{proposition}[Revelation Principle]
  \label{p:rev}
  For any protocol $\proto$ and optimal strategy $\strat$ of Alice
  mapping positive points to messages in the protocol (which minimizes
  the total number of data points disclosed), there is a truthful
  and direct protocol $\protor$ with the same outcome under
  truthtelling (as under protocol $\proto$ with strategy $\strat$).
\end{proposition}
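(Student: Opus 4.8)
The plan is to run the standard revelation-principle argument from mechanism design, specialized to this setting: we absorb Alice's optimal strategy $\strat$ into the protocol so that truthful reporting in the new direct protocol reproduces exactly the play of $\strat$. Concretely, I would define $\protor$ on a report $\Ap$ (Alice's alleged positives) and true positives $\Sp$ (verifiable by Bob) as follows: internally simulate a run of $\proto$ in which Alice sends exactly the messages that $\strat$ prescribes when her type is $\Ap$, while every other party (Bob's labeling and verification, and any court resolution of disputes) responds according to the true positives $\Sp$. In a multi-round protocol, ``play $\strat$ as if the type were $\Ap$'' means Alice reacts to the real incoming transcript using $\strat$'s rule for the pretended type $\Ap$; this is well defined because $\strat$ is a function of Alice's (claimed) type together with the observed history. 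The output $\protor(\Ap,\Sp)$ is the resulting set of points disclosed to Bob, and by construction $\protor$ is direct: Alice's only interaction is the one-shot report $\Ap$.

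Next I would check outcome equivalence under truthtelling. When Alice reports $\Ap=\Sp$, the simulated run is identical to the real run of $\proto$ in which Alice's type is $\Sp$ and she plays her optimal strategy $\strat$. Hence $\protor(\Sp,\Sp)$ equals the outcome of $\proto$ under $\strat$, which is exactly the equivalence asserted in the proposition.

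The crux is truthfulness, and this is where optimality of $\strat$ enters. Suppose, toward a contradiction, that for some true type $\Sp$ Alice strictly prefers to report some $\Ap\neq\Sp$, i.e.\ $\protor(\Ap,\Sp)$ discloses strictly fewer points than $\protor(\Sp,\Sp)$. By the definition of $\protor$, the set $\protor(\Ap,\Sp)$ is precisely the outcome of $\proto$ when Alice's true type is $\Sp$ but she deviates to play ``$\strat$ as if the type were $\Ap$''. This deviation is a feasible strategy for Alice in $\proto$: she is free to send whatever messages she likes, and the verification and court steps are already folded into the simulated outcome. A strictly smaller disclosure would then contradict the assumed optimality of $\strat$ for type $\Sp$ in $\proto$. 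Therefore reporting $\Ap=\Sp$ is optimal, so $\protor$ is truthful.

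I expect the main obstacle to be conceptual rather than computational: making precise that ``report $\Ap$ in $\protor$'' corresponds to a genuinely \emph{feasible} deviation ``play $\strat$ at $\Ap$'' for an Alice whose true type is $\Sp$. This requires care about the two distinct roles of the true labels $\Sp$, namely through Bob's verification (handled by feeding $\Sp$ to the non-Alice parties in the simulation) and through the later-round messages that $\strat$ generates in response to the transcript (handled by having Alice react to the real transcript while following $\strat$'s rule for the pretended type $\Ap$). Once this correspondence between misreports in $\protor$ and deviations in $\proto$ is nailed down, the contradiction with optimality of $\strat$ is immediate, and notably the argument uses no structure of the linear-separability setting.
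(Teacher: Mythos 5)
Your proof is correct, and it rests on the same core idea as the paper's: fold Alice's optimal strategy $\strat$ into the mechanism, and show that any profitable misreport in $\protor$ would translate into a feasible profitable deviation in $\proto$, contradicting the optimality of $\strat$. The constructions differ in one real way, though. You define $\protor(\Ap,\Sp)$ by composing $\strat(\Ap)$ \emph{adaptively} with the real environment, so that $\protor(\Ap,\Sp) = \proto(\strat(\Ap),\Sp)$ by fiat and no further case analysis is needed. The paper instead first simulates $\strat(\Ap)$ assuming $\Ap$ is the truth to obtain a transcript, then replays that fixed transcript against the real true positives $\Sp$, and \emph{punishes} any detected divergence by disclosing the entire set $S$; its truthfulness argument then splits into the detected case (where the punishment makes the lie no better) and the undetected case (where the outcome equals $\proto(\strat(\Ap),\Ap)=\proto(\strat(\Ap),\Sp)$ and the deviation contradiction applies). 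The trade-off: your version is the cleaner textbook revelation principle, but it needs $\strat(\Ap)$ to be a complete contingent plan, defined even at off-path histories reached when Bob's labels and the court's rulings reflect $\Sp\neq\Ap$ -- a point you correctly flag and that is unproblematic under the standard game-theoretic notion of strategy. The paper's replay-and-punish construction only ever uses the on-path transcript of $\strat(\Ap)$ under the hypothesis that $\Ap$ is true, so it is agnostic to how (or whether) the strategy is specified off-path, at the cost of the extra punishment clause. Both arguments are valid and use no structure of the linear-separability setting, as you note.
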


\begin{proof}
  Define the revelation protocol as $\protor(\Ap,\Sp)$ as follows:
  \begin{enumerate}
  \item Simulate
    strategy $\strat(\Ap)$ in $\proto$ assuming $\Ap$ are the true
    positives.
  \item Given the transcript of this simulation, attempt the same interaction as $\strat(\Ap)$ with the real true positives
    $\Sp$.
  \item If the behavior of $\proto$ with true positives $\Sp$ is ever
    deviates from the simulated transcript, then reveal
    the full dataset $S$ to Bob.  (Otherwise, the outcome is
    identical to $\proto(\strat(\Ap),\Ap)$.)
  \end{enumerate}

  We now argue that the optimal strategy in $\protor$ is to report
  $\Ap = \Sp$.  Suppose some $\Ap \neq \Sp$ gives a strictly better
  outcome.  Note: it must be that the outcomes of
  $\proto(\strat(\Ap),\Ap)$ and $\proto(\strat(\Ap),\Sp)$ are the
  same, otherwise, the difference would be detected and the full set
  $S$ would be disclosed to Bob.  In this case, however, with true
  positives $\Sp$ following $\strat(\Ap)$ rather than $\strat(\Sp)$ in
  $\proto$ gives a strictly better outcome, which contradicts the
  optimality of $\strat$ for $\proto$.
  \end{proof}

\subsection{Minimal Protocol}

In this section we prove that every correct protocol discloses the
critical points $\CP(\Sp)$ on dataset $\Sp \sqcup \Sm = S$; thus, the
critical points protocol is optimal.

\begin{theorem}
  \label{prop:minimal}
  Every correct protocol $\proto$ on dataset $\Sp \sqcup \Sm = S$
  discloses a set of points that contains $\CP(\Sp)$.
\end{theorem}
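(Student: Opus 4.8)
The plan is to prove the statement by an indistinguishability (coupling) argument: for each critical point $x \in \CP(\Sp)$ I would exhibit an alternative ground truth in which $x$ is genuinely positive and which is completely indistinguishable from the real one until $x$ is handed to Bob, so that correctness forces the protocol to disclose $x$ in both worlds. By the revelation principle (\Cref{p:rev}) it suffices to treat direct protocols $\proto(\Ap,\cdot)$, where Alice's only move is a report $\Ap$ and the remaining interaction is between {\TTP} and Bob, with Bob labelling any point he is shown according to the true labels.

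Fix $x \in \CP(\Sp)$. By the definition of $\CP$, we have $x \in \Leak(\Sp, S \setminus \Sp \setminus \{x\}) = \Leak(\Sp, \Sm \setminus \{x\})$, so there is a classifier $h \in \cH$ consistent with $\Sp$ positive, $\Sm \setminus \{x\}$ negative, and $h(x) = +1$. This $h$ witnesses that the alternative instance $\Sp' = \Sp \cup \{x\}$, $\Sm' = \Sm \setminus \{x\}$ is (strictly) linearly separable, hence a legitimate input on which correctness must hold.

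Now run $\proto$ with the same fixed report $\Ap$ in two worlds: the real one with true positives $\Sp$ and the alternative one with true positives $\Sp'$. I would argue by induction on the rounds of interaction that the two transcripts coincide up to the first round in which {\TTP} sends $x$ to Bob. Indeed the two worlds differ only in the true label of the single point $x$; as long as $x$ has not been shown to Bob, every label Bob returns is identical in the two worlds, so every subsequent message of {\TTP} (a function of $\Ap$ and the labels seen so far) is identical as well. Consequently $x$ is shown to Bob in the real world if and only if it is shown to Bob in the alternative world, and being shown to Bob is exactly what it means for $x$ to be disclosed.

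Finally, in the alternative world $x$ is a true positive, so by correctness $x \in \proto(\Ap,\Sp')$, i.e.\ $x$ is disclosed there; by the coupling it is therefore disclosed in the real world, $x \in \proto(\Ap,\Sp)$. Since $x \in \CP(\Sp)$ and the report $\Ap$ were arbitrary, this gives $\CP(\Sp) \subseteq \proto(\Ap,\Sp)$ for every report, which together with correctness ($\Sp \subseteq \proto(\Ap,\Sp)$) would prove the claim. I expect the main obstacle to be making the coupling precise: formalizing the interactive transcript, confirming that ``disclosed'' coincides with ``received and labelled by Bob'', and handling adaptivity (and randomness, by coupling {\TTP}'s coins across the two worlds so that the inductive step goes through coin-by-coin).
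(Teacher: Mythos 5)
Your proposal is correct and is essentially the paper's own argument: the paper also invokes the revelation principle (\Cref{p:rev}) to reduce to direct protocols, and then applies an indistinguishability lemma (\Cref{l:const-not-disclosed}) saying that a direct protocol's output — being a function only of Alice's report and the labels of points actually disclosed to Bob — cannot change when the labels of undisclosed points are flipped to positive (provided the flipped instance is still separable), which contradicts correctness exactly as in your coupling. The only cosmetic differences are that you carry out the coupling by induction on rounds and flip the single point $x$ (getting separability of the alternative instance directly from the definition of $\CP(\Sp)$), whereas the paper packages the coupling as a one-shot lemma and flips a whole set $X \ni x^*$ taken from the positive side of a witness classifier consistent with the disclosed negatives $\Am$.
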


\begin{proof}
  By \Cref{p:rev}, it is without loss to assume $\proto$ is truthful.
  By the definition of truthful protocols, Alice cannot have fewer points
  disclosed by reporting non-truthfully.  Suppose for a contradiction
  that a point $x^* \in \CP(\Sp)$ is not disclosed in $\Sp \sqcup \Am =
  \proto(\Sp,\Sp)$, i.e., $x^* \in \Sm$ but $x^* \not\in \Am$.

  Recall $\CP(\Sp) = \{x \in S : \Leak(\Sp,S \setminus \Sp \setminus
  \{x\})\}$ is the points that are each labeled as positive by some
  consistent classifier with respect to positives $\Sp$ and negatives
  $S \setminus \Sp \setminus \{x\}$.  Monotonicity of $\Leak$ implies
  that $x^*$ is in $\Leak(\Sp,\Am)$ as $\Am \subseteq S \setminus \Sp
  \setminus \{x^*\}$.  By the definition of $\Leak$ there is a
  consistent classifier that labels $\Sp \cup \{x^*\}$ as positive and
  $\Am$ as negative.

    Since $x^*$ is in
  $\Leak(\Sp,\Am)$, there exists a separating hyperplane for $\Sp \cup
  X$ (with $X \ni x^*$) and the remaining points (which contains
  $\Am$). Thus, \Cref{l:const-not-disclosed} (below) can be applied where $\Sp
  \cup X$ is separable but $x^* \in X$ is not disclosed on $\proto(\Sp,\Sp
  \cup X)$, a contradiction to the correctness of $\proto$ as $x^* \in
  X$ is considered a positive point in the execution of $\proto(\Sp,\Sp
  \cup X)$
\end{proof}

\begin{lemma}
  \label{l:const-not-disclosed}
  In any direct protocol $\proto$, if $X \sqcup \Sp$ is separable and
  $X \cap \proto(\Ap,\Sp) = \emptyset$ (i.e., $X$ is not disclosed by
  $\proto$ with $X$ are negative) then $\proto(\Ap,\Sp) =
  \proto(\Ap,\Sp \cup X)$ (i.e., $\proto$ on $\Ap$ discloses the same points when points $X$ are all positive or all negative).
\end{lemma}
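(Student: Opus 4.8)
The plan is to exploit the defining structural feature of a direct protocol: the true-positive set enters the execution only through Bob's labels on the points that are actually disclosed to him. Points that are never disclosed are never labeled by Bob, so their true labels cannot influence any message sent during the protocol. Since the two instances $\proto(\Ap,\Sp)$ and $\proto(\Ap,\Sp\cup X)$ share the same known set $S$, the same alleged positives $\Ap$ (Alice interacts only once and reports $\Ap$ in both), and true-positive sets that differ only on the points of $X$, I would argue that if no point of $X$ is ever disclosed then the two executions produce identical transcripts and hence the same output set.

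Concretely, I would couple the two executions and prove transcript equality by induction on the rounds of interaction. For the base case, the first batch of points disclosed to Bob is a function of $S$ and $\Ap$ alone, so it is identical in both instances. For the inductive step, suppose the transcripts agree through round $t$; then the set of points disclosed so far is the same in the two executions, and is a subset of the final disclosed set $\proto(\Ap,\Sp)$, which by hypothesis is disjoint from $X$. Consequently Bob is asked to label exactly the same points in both executions, and because $\Sp$ and $\Sp\cup X$ agree on every point outside $X$ (a disclosed point $p$ receives label $+1$ iff $p\in\Sp \Leftrightarrow p\in\Sp\cup X$, since $p\notin X$), Bob returns identical labels. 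Trent's deterministic next message, and in particular the next batch of disclosed points, is therefore identical. This closes the induction, so the full transcripts coincide and $\proto(\Ap,\Sp)=\proto(\Ap,\Sp\cup X)$.

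The separability hypothesis, that $\Sp\cup X$ is separable from $S\setminus(\Sp\cup X)$, is used to guarantee that the second instance is realizable, so that $\proto(\Ap,\Sp\cup X)$ is a legitimate execution in the protocol's domain: Bob can consistently label every disclosed point as dictated by the true positives $\Sp\cup X$ with no inconsistency routed to the court. Without realizability the comparison would be ill-posed; note also that this is precisely the form in which the lemma is invoked in the proof of \Cref{prop:minimal}, where $x^*\in X$ is a critical point witnessed by a separating hyperplane for $\Sp\cup X$ against the remaining points.

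The main obstacle is the self-referential character of the coupled induction: to know that no point of $X$ is disclosed in the \emph{second} execution I need the transcripts to coincide, yet to show the transcripts coincide I need that no point of $X$ is disclosed so that Bob's labels match. The resolution is to carry the invariant ``disclosed-so-far is disjoint from $X$'' through the induction, importing the disjointness at each round from the first execution via the transcript equality already established for all earlier rounds; the first execution supplies this property for free because its final disclosed set $\proto(\Ap,\Sp)$ avoids $X$ by assumption. Making this invariant precise, and stating cleanly that disclosure may be adaptive in Bob's labels across multiple rounds, is the only delicate point; the remainder is bookkeeping.
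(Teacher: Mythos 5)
Your proof is correct and takes essentially the same approach as the paper's: the paper's own (two-sentence) argument asserts exactly your key point, that a direct protocol's output depends on the true-positive set only through the labels of the points it discloses, and since the disclosed points of $\proto(\Ap,\Sp)$ avoid $X$ while $\Sp$ and $\Sp\cup X$ agree off $X$, the two executions coincide (with separability guaranteeing the second instance is well defined). Your round-by-round coupling induction simply makes rigorous, including the resolution of the circularity you flag, what the paper states tersely.
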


\begin{proof}
  Separability of $X\cup \Sp$ implies that $\proto(\Ap,\Sp \cup X)$ is
  well defined.  By definition a protocol is only a function of its
  input, in this case, $\Ap$ and the points that it discloses.  Since
  $X$ is not disclosed in $\proto(\Ap,\Sp)$ then $\proto(\Ap,\Sp \cup
  X)$ has the same result, and $X$ is not disclosed by it as well.
\end{proof}

\section{Machine Learning Guarantees} \label{sec:mlframework}

\newcommand{\algoclassifier}{\text{Alg}_{\cH}}
\newcommand{\htrain}{\hat{h}}

In this section we consider the machine learning framework related to multi-party e-discovery (MPeD). We show how the protocol defined in the previous section (for linear classification), when instantiated with {\em any} training algorithm that satisfies a natural property, that we call the {\em independence of irrelevant alternatives (IIA)}, achieves the same learning guarantees as single-party e-discovery (SPeD). We provide a reduction from MPeD to SPeD; this shows that our protocol suffers no loss in generalization or sample complexity compared to the standard single-party setting. Finally, we instantiate this reduction using the classic support vector machine (SVM) algorithm, by showing that it satisfies the IIA property.

\subsection{Machine learning framework}

We start by recalling the single party e-discovery (SPeD) framework which corresponds to a standard machine learning pipeline, involving a training algorithm run on the training set $S$ to find a good classifier $\htrain$, and then applying this classifier on the entire dataset $U$.  $\algoclassifier$ will denote a learning algorithm for the hypothesis class $\cH$ that takes in labeled samples as input, and outputs a hypothesis in $\cH$ consistent with the labeled samples.



\renewcommand{\algorithmicrequire}{\textbf{Input:}}
\renewcommand{\algorithmicensure}{\textbf{Output:}}

\begin{algorithm}
\caption{ML framework for Single-Party e-Discovery (SPeD)}\label{alg:ml:sped}
\begin{algorithmic}[1]
\Require Unlabeled dataset $U$ and the hypothesis class $\cH$. 
\Ensure Data points with positive labels $U_+\subseteq U$.
\State Sample the training data set $S\subseteq U$ (likely with $|S|\ll |U|$ ).
\State (Hand-)Label $S$ to get $S_+$ and $S_-$.
\State Use $\algoclassifier$ to learn classifier $\htrain\in \cH$ on the labeled data with positives $S_+$ and negatives $S_-$.
\State Apply classifier $\htrain$ on $U$ to get $U_+$
\end{algorithmic}
\end{algorithm}



In the Multi-Party e-Discovery (MPeD) framework, there are three parties Alice, Bob and {\TTP} that perform different functions. Alice first sends the entire dataset $U$ to {\TTP}. {\TTP} generates the training samples $S \subseteq U$ and sends it to Alice. Alice, {\TTP} and Bob engage in a protocol as in Section~\ref{sec:ccp}. 
At the end of the protocol, Bob is given a set of positive examples $S'_+$ (which is hopefully $S_+$) and some other negative samples $S'_-$ (which is hopefully much smaller than $S_-$), which he then uses to train 
a classifier $\htrain$ that is used to classify the entire dataset $U$. 


 \renewcommand{\algorithmicrequire}{\textbf{Input:}}
 \renewcommand{\algorithmicensure}{\textbf{Goal:}}

\begin{algorithm}
\caption{ML framework for Multi-Party e-Discovery (MPeD)}\label{alg:ml:mped}
\begin{algorithmic}[1]
\Require Alice has unlabeled data points $U$. The hypothesis class is $\cH$ is known publicly.  
\Ensure Bob receives data points with positive labels $U_+\subseteq U$.
\State Alice sends entire dataset $U$ to {\TTP}. 
\State \cTTP samples the training data set $S \subseteq U$ (likely with $|S| \ll |U|$).
\State {\TTP} sends $S$ to Alice. 
\State Alice, {\TTP}, Bob participate in the {\em critical points protocol} (\Cref{alg:ccp}) of Section~\ref{sec:ccp}. At the end of it, Bob receives labeled samples $S'_+$ and $S'_-$.   
\State Bob uses $\algoclassifier$ to learn a classifier $\htrain\in \cH$ consistent with the labeled data $S'_+$ and $S'_-$, and sends $\htrain$ to {\TTP}. 
\State {\TTP} checks the consistency of $\htrain$ with $S'_+, S'_-$ and applies $\htrain$ on $U$ to get $U_+$ and sends it to Bob.
\end{algorithmic}
\end{algorithm}

We want the classifier that is output in multi-party ML framework to be as good as the classifier in the single-party setting, irrespective of Alice's actions; ideally, it also  maintains the same statistical properties (e.g., sample complexity) as the single-party setting. However, the choice of the learning algorithm is important, since the algorithm is trained on a different set of labeled samples ($(S'_+, S'_-)$ as opposed to $(S_+, S_-)$). The following property of the learning algorithm will play a crucial role.   
Recall that $\cH(\Sp, \Sm)$ denotes the set of hypothesis in $\cH$ consistent with the labeled data given by positives $\Sp$ and negatives $\Sm$; also for a linearly separable data set $\Sp \sqcup \Sm$, the critical points are denoted by $\CP(\Sp)$.

\begin{definition} (IIA property) \label{def:IIA}
	A learning algorithm $\algoclassifier$ is said to satisfy \emph{independence of irrelevant alternatives} (IIA) if for 
any $S_+$ and $S_- \supseteq \CP(\Sp)$ that is separable and for any $S'_- \supseteq \CP(\Sp)$, we have that $\algoclassifier(S_+,S_-)=\algoclassifier(S_+,S'_-)$.  
\end{definition}
One can also define a potentially stronger notion of IIA where  $\cH(S_+,S_-)=\cH(T_+,T_-)\implies \algoclassifier(S_+,S_-)=\algoclassifier(T_+,T_-)$, but the above weaker notion suffices for our purposes. 


We focus on the setting where the data set $U$ is linearly separable i.e., $\cH$ is the set of linear classifiers, and there is an $h^* \in \cH$ that is consistent with the true labels of $U$. 
We now show that we can use our protocol from Section~\ref{sec:ccp} in Step 4 of the above framework, along with any algorithm that satisfies the IIA property to achieve the same statistical guarantees as the single-party ML setting.  Note that the IIA property pertains only to the learning algorithm that is employed by Bob in Step 6 of the \Cref{alg:ml:mped}. 

\begin{theorem}\label{prop:IIA}
	Suppose the data set $U$ is linearly separable and the learning algorithm $\algoclassifier$ satisfies the IIA property. Then with the same sampling procedure (to produce $S$), the outputs of \Cref{alg:ml:sped} (SPeD) and \Cref{alg:ml:mped} (MPeD) are identical.
\end{theorem}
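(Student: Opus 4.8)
The plan is to show that the two frameworks feed logically equivalent training data to the same learning algorithm $\algoclassifier$, so that they learn the identical classifier $\htrain$ and, applying it to the same universe $U$, produce identical positive sets $U_+$. The whole argument is a reduction to the IIA property once the labeled set seen by Bob in MPeD is pinned down.

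First I would couple the sampling steps so that the training set $S \subseteq U$ drawn in Step~1 of SPeD (\Cref{alg:ml:sped}) equals the one drawn by {\TTP} in Step~2 of MPeD (\Cref{alg:ml:mped}); ``the same sampling procedure'' in the hypothesis is exactly this coupling. Since $U$ is linearly separable, so is its subset $S$, which therefore has well-defined true labels $S = \Sp \sqcup \Sm$. SPeD then trains on $(\Sp,\Sm)$, whereas MPeD runs the critical points protocol (\Cref{alg:ccp}) on $S$ and, after the reconciliation/court-resolution step, has Bob train on the true-labeled disclosed points.

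The crux is to locate the disclosed labeled set $(S'_+,S'_-)$ relative to $(\Sp,\Sm)$. Writing the disclosed set as $\Leak(\Ap',\Am')$ and splitting it by true label, I set $S'_+ = \Leak(\Ap',\Am') \cap \Sp$ and $S'_- = \Leak(\Ap',\Am') \cap \Sm$, using that every disclosed point lies in $S$ and carries its true label after reconciliation. Correctness of the protocol (\Cref{thm:min}(1)) gives $\Leak(\Ap',\Am') \supseteq \Sp$, so $S'_+ = \Sp$; minimality (\Cref{thm:min}(2)) gives $\Leak(\Ap',\Am') \supseteq \CP(\Sp)$, so $\CP(\Sp) \subseteq S'_-$. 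Since $S'_- \subseteq \Sm$, the pair $(\Sp,S'_-)$ inherits separability from $(\Sp,\Sm)$. This chain $\CP(\Sp) \subseteq S'_- \subseteq \Sm$ holds for every report of Alice, including the degenerate branch where $\Ap$ and $S \setminus \Ap$ are inseparable and all of $S$ is disclosed (then $S'_- = \Sm$); hence the conclusion is robust to Alice's behavior.

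With these inclusions established, the IIA property (\Cref{def:IIA}), applied to negatives $\Sm \supseteq \CP(\Sp)$ (separable) and $S'_- \supseteq \CP(\Sp)$, yields $\algoclassifier(\Sp,\Sm) = \algoclassifier(\Sp,S'_-) = \algoclassifier(S'_+,S'_-)$. So MPeD and SPeD learn the same $\htrain$; moreover the consistency check by {\TTP} in Step~6 of \Cref{alg:ml:mped} passes automatically because $\algoclassifier$ returns a classifier consistent with its input by definition. Applying the common $\htrain$ to the common $U$ gives identical $U_+$. I expect the only genuine work to be the bookkeeping of the third paragraph---correctly reading $(S'_+,S'_-)$ off the protocol transcript and certifying both inclusions $\CP(\Sp) \subseteq S'_- \subseteq \Sm$---since once they are in place IIA closes the argument in one line.
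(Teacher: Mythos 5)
Your proposal is correct and follows essentially the same route as the paper's own proof: invoke \Cref{thm:min} to conclude that, regardless of Alice's behavior, Bob's training set consists of $\Sp$ together with negatives sandwiched between $\CP(\Sp)$ and $\Sm$, and then apply the IIA property (\Cref{def:IIA}) to conclude the learned classifier, and hence $U_+$, is identical. The paper states this in two lines; your version simply fills in the bookkeeping (coupling the sampling, splitting the disclosed set by true label, and handling the inseparable-report branch), all of which is consistent with the paper's argument.
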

We remark that there can be randomness in the sampling procedure and potential random choices in the learning algorithm $\algoclassifier$; so $U_+$ is a random set. The guarantee of \Cref{prop:IIA} is that the distributions of $U_+$ are the same. Alternatively, fixing the random choices in the sampling, and in the algorithm $\algoclassifier$, the set $U_+$ is the same. 
\begin{proof}[Proof of \Cref{prop:IIA}]
	The proof follows easily by combining the guarantees of 
	\Cref{thm:min} and the IIA property of $\algoclassifier$.  
	First, from \Cref{thm:min}, we know that irrespective of the actions of Alice, Bob receives $S_+$ and $C^*(S_+)$, where $C^*(S_+)$ denotes the critical points. 
	Hence, from the IIA property of $\algoclassifier$, the classifier $\htrain$ that is produced is identical (for the same random choices of the algorithm $\algoclassifier$. Hence $U_+$ is identical in both cases.    
\end{proof}
 
For a given linearly separable dataset there may be several potential linear classifiers consistent with it (and so too for $(S_+, S_-)$). Furthermore, not all algorithms for linear classification may satisfy the IIA property (e.g., the popular Perceptron algorithm does not satisfy IIA). However, the well-known SVM algorithm that finds the maximum margin classifier for the given linearly-separable dataset satisfies the IIA property (see \Cref{lem:svm:iia} in the next section). Hence we can instantiate \Cref{prop:IIA} for linear classifiers by using the SVM algorithm as follows. (The proof just follows by combining \Cref{prop:IIA} and \Cref{lem:svm:iia}.)
 
 \begin{theorem}\label{corr:svm:ml}
	Suppose the data set $U$ is linearly separable and the learning algorithm $\algoclassifier$ is the SVM algorithm given in Section~\ref{sec:svm}. Then with the same sampling procedure (to produce $S$), the outputs of \Cref{alg:ml:sped} (SPeD) and \Cref{alg:ml:mped} (MPeD) are identical.
\end{theorem}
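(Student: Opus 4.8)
The plan is to treat \Cref{corr:svm:ml} as a direct corollary of \Cref{prop:IIA}. Since \Cref{prop:IIA} already establishes that SPeD (\Cref{alg:ml:sped}) and MPeD (\Cref{alg:ml:mped}) produce identical outputs whenever the learning algorithm $\algoclassifier$ employed by Bob satisfies the IIA property, the only thing left to supply is the verification that the maximum-margin SVM solver satisfies IIA. I would therefore first instantiate $\algoclassifier$ with the SVM algorithm of \Cref{sec:svm}, invoke \Cref{prop:IIA}, and thereby reduce the entire statement to the single claim that the SVM output depends only on $\Sp$ and the critical points $\CP(\Sp)$.

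Concretely, the reduction runs as follows. Fix the randomness in the sampling procedure that generates $S$ and any internal randomness of $\algoclassifier$ (SVM is deterministic on separable data, so the latter is moot). By \Cref{thm:min}, regardless of Alice's report, at the end of the critical points protocol Bob holds exactly $\Sp \cup \CP(\Sp)$, i.e., the positive training points together with the critical negatives. In SPeD, Bob instead trains on the full labeled sample with negatives $\Sm$. Since $\Sm \supseteq \CP(\Sp)$ and the negatives Bob holds in MPeD also contain $\CP(\Sp)$, the IIA property of SVM forces $\algoclassifier(\Sp,\Sm) = \algoclassifier(\Sp,\CP(\Sp))$, so the trained classifier $\htrain$ is identical in the two frameworks. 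Applying the same $\htrain$ to $U$ yields the same $U_+$, which is the desired conclusion.

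The substantive work, and the main obstacle, is thus entirely contained in the deferred \Cref{lem:svm:iia}: that SVM satisfies IIA. I expect this to hinge on showing that the negative support vectors of the maximum-margin hyperplane for $(\Sp,\Sm)$ always lie in $\CP(\Sp)$. The intuition is that a critical point is exactly a negative point that is a unique maximizer in some separating direction, equivalently a vertex of $\Safe(\Sp,\Sm)$ by \Cref{l:connecting}; non-critical negatives lie inside the safe region and are dominated in every separating direction, so they cannot correspond to active (tight) constraints of the SVM optimization. Once one argues that deleting such inactive points leaves the max-margin optimizer unchanged, and symmetrically that enlarging the negative set to any separable $S'_- \supseteq \CP(\Sp)$ introduces only slack constraints, the equality $\algoclassifier(\Sp,\Sm) = \algoclassifier(\Sp,S'_-)$ follows from uniqueness of the SVM solution on strictly separable data. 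The technical ingredients I would lean on are the uniqueness of the maximum-margin hyperplane and the characterization of support vectors via complementary slackness, together with the vertex/critical-point identification already proved in \Cref{l:connecting}.
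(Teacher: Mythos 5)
Your reduction is exactly the paper's proof of \Cref{corr:svm:ml}: the theorem is obtained by combining \Cref{prop:IIA} with \Cref{lem:svm:iia}, and your account of the reduction (fix the sampling randomness, use \Cref{thm:min} to conclude Bob holds $\Sp \cup \CP(\Sp)$, then apply IIA) is precisely the paper's proof of \Cref{prop:IIA}. Since \Cref{lem:svm:iia} is stated and proved in the paper before this theorem, citing it is legitimate and your proof of the theorem itself stands.

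One caveat on your sketch of how \Cref{lem:svm:iia} would be proved, since you present it as the substantive obstacle: the claim that the negative support vectors of the max-margin classifier for $(\Sp,\Sm)$ always lie in $\CP(\Sp)$ --- equivalently, that non-critical negatives ``cannot correspond to active (tight) constraints'' --- is false. Take positives $\{(2,0)\}$ and negatives $\{(0,1),(0,-1),(0,0)\}$ in the plane: the max-margin hyperplane is $x_1 = 1$, and $(0,0)$ lies exactly on the margin boundary (its constraint is tight), yet it is not critical because it is a convex combination of the other two negatives and hence can never be separated to the positive side. The paper is aware of exactly this subtlety (its footnote in the proof of \Cref{lem:svm:iia} notes that not all support vectors need lie in $S_+ \cup S'_-$) and argues differently: by \Cref{c:same} and \Cref{thm:verts}, the optimizer $(w',b')$ for $(\Sp, S'_-)$ is consistent with $(\Sp,\Sm)$, and its \emph{minimum margin value} over $\Sp \cup \Sm$ is attained by some point of $\Sp \cup S'_-$ (because all of $\Sm$ lies in $\Safe(\Sp,S'_-)$, so no excluded negative can be strictly closer to the hyperplane); hence $(w',b')$ is also max-margin for $(\Sp,\Sm)$, and uniqueness from strong convexity gives $(w,b)=(w',b')$. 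So the correct route goes through margin values and the safe-set equality, not through complementary slackness or a containment of support vectors in the critical points.
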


 This theorem shows that the multi-party e-discovery protocol given in \Cref{alg:ml:mped} incurs no loss compared to the single-party setting (\Cref{alg:ml:sped}) in terms of properties of the output classifier $\htrain$. In particular, any statistical property (like test error or generalization guarantee) of the classifier $\htrain$ transfer over to the multi-party setting with no loss in the statistical efficiency. 
See Section~\ref{sec:kernel} for the extension to kernel classifiers.



\newcommand{\wdagger}{w^{\dagger}}
\newcommand{\bdagger}{b^{\dagger}}

\subsection{Support Vector Machines (SVM) and Properties} \label{sec:svm}

We now describe the support vector machine (SVM) algorithm which is used for learning linear classifiers for a given set of labeled samples in high-dimensional spaces. We will also prove the IIA property and see some facts about the SVM algorithm that will be useful in the next section. 

The setting is as follows. We are given a set of labeled samples $T=\{(x_1, y_1), (x_2, y_2), \dots, (x_m, y_m) \} \subset \mathbb{R}^n \times \{\pm 1\}$. The goal is to find a linear classifier $(w,b) \in \mathbb{R}^n \times \R$ such that $\forall i \in [m], ~ y_i ({w \cdot x_i} + b) >0$ if such a classifier exists i.e., it is linearly separable.

The Hard-SVM algorithm finds the linear classifier that separates the positive and negative samples with the largest possible margin (if the data is linearly separable). For a classifier $(w,b)$ with $\| w \|_2 =1$, the margin of a sample $(x,y)$ is the distance between a point $x$ and the hyperplane $(w,b)$ and is given by $\max \big\{ y({w \cdot x}+b) ,0 \big\}$. Note that a linear classifier does not change by scaling. 
The following claim shows that the \textsc{Hard-SVM} problem of finding maximum margin linear classifier can be reformulated in either of the following two ways. 
\begin{fact}\citep
[See e.g., chapter 15 of][]{ShaiBSSbook}\label{fact:svm1}
Given a set of linearly separable labeled samples \\
$\{(x_1, y_1), (x_2, y_2), \dots, (x_m, y_m) \} \subset \mathbb{R}^n \times \{\pm 1\}$, consider the following optimization problems:
\begin{align}\label{eq:svm1}
    w^*, b^*&=\argmax_{(w,b): \|w \|_2 =1} \min_{i \in [m]} \lvert w \cdot x_i+b \rvert  \\
 \text{ s.t. } &\forall i \in [m], ~~ y_i \big( w \cdot x_i+b \big)>0 \nonumber \\
\label{eq:svm2}
    \wdagger, b^{\dagger}&=\argmin_{(w,b)} \| w \|_2^2  \\ \text{ s.t. } & \forall i \in [m],~~ y_i \big( {w \cdot x_i}+b \big)  \ge 1. \nonumber
 \end{align}
The optimization problems \eqref{eq:svm1} and \eqref{eq:svm2} are essentially equivalent, with the optimizers related as $w^* = \frac{\wdagger}{\|\wdagger\|_2}, b^* = \frac{\bdagger}{\|\wdagger\|_2}$. Moreover \eqref{eq:svm2} is a convex program that can be solved in polynomial time.  

\end{fact}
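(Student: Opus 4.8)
The plan is to prove the equivalence by exhibiting an explicit scale-preserving correspondence between the feasible solutions of the two programs, and to verify tractability of \eqref{eq:svm2} separately. I would begin with the tractability claim, which is the easier half: the objective $\|w\|_2^2$ is a convex quadratic and the constraints $y_i(w \cdot x_i + b) \ge 1$ are linear, so \eqref{eq:svm2} is a convex quadratic program. Linear separability guarantees feasibility---given any strict separator one may rescale it so that every functional margin exceeds $1$---and since the objective is coercive in $w$ over the closed, nonempty feasible region, the minimizer $(\wdagger, \bdagger)$ is attained; standard convex-programming methods (interior-point or ellipsoid) then find it in polynomial time.

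For the equivalence, the key observation is that with $\|w\|_2 = 1$ the quantity $|w \cdot x_i + b|$ is exactly the Euclidean distance from $x_i$ to the hyperplane $\{x : w \cdot x + b = 0\}$, so the objective of \eqref{eq:svm1} is the geometric margin $\gamma^* := \min_i |w^* \cdot x_i + b^*|$; the two programs thus differ only in how they fix the otherwise-free scale of $(w,b)$, with \eqref{eq:svm1} normalizing $\|w\|_2 = 1$ and \eqref{eq:svm2} normalizing the functional margin to be $\ge 1$. I would make this precise in two directions. Given the optimizer $(\wdagger, \bdagger)$ of \eqref{eq:svm2}, the normalized pair $(w^*, b^*) = (\wdagger, \bdagger)/\|\wdagger\|_2$ satisfies $\|w^*\|_2 = 1$ and $y_i(w^* \cdot x_i + b^*) \ge 1/\|\wdagger\|_2$ for every $i$, so it is feasible for \eqref{eq:svm1} with objective value at least $1/\|\wdagger\|_2$; hence $\gamma^* \ge 1/\|\wdagger\|_2$. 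Conversely, given the optimizer $(w^*, b^*)$ of \eqref{eq:svm1} with margin $\gamma^*$, the scaled pair $(w^*/\gamma^*, b^*/\gamma^*)$ satisfies $y_i(w \cdot x_i + b) \ge 1$ and has norm $1/\gamma^*$, so it is feasible for \eqref{eq:svm2}; hence $\|\wdagger\|_2 \le 1/\gamma^*$, i.e.\ $\gamma^* \le 1/\|\wdagger\|_2$. Combining the two inequalities yields $\gamma^* = 1/\|\wdagger\|_2$, and because both bounds are then tight the feasible points constructed above are in fact optimal, giving exactly the stated relations $w^* = \wdagger/\|\wdagger\|_2$ and $b^* = \bdagger/\|\wdagger\|_2$.

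The main obstacle, such as it is, is bookkeeping around the differing normalizations and the open constraint $y_i(w \cdot x_i + b) > 0$ in \eqref{eq:svm1}. Because separability is strict we have $\gamma^* > 0$, so the supremum in \eqref{eq:svm1} is attained---one optimizes the continuous margin functional over the compact unit sphere and notes that the optimum has strictly positive margin, so the open constraint is inactive there---which legitimizes writing $\argmax$ and treating \eqref{eq:svm1} as having a genuine optimizer despite its nonconvex feasible region. I would close by remarking that a linear classifier is invariant under positive rescaling of $(w,b)$, so the two programs describe the same separating hyperplane; the content of the Fact is precisely that maximum-margin separation \eqref{eq:svm1} coincides with minimum-norm interpolation at unit functional margin \eqref{eq:svm2}, and that the convexity of the latter is what buys the polynomial-time guarantee.
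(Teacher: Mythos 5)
Your proof is correct. Note, however, that the paper itself gives no proof of this Fact: it is stated with a citation to chapter 15 of the Shalev-Shwartz--Ben-David text, so there is no internal argument to compare against. What you have written is essentially the standard textbook proof (it is, in substance, Lemma 15.2 of that reference): the two programs differ only in how they fix the scale invariance of $(w,b)$, and the two-directional scaling correspondence --- normalize $(\wdagger,\bdagger)$ by $\|\wdagger\|_2$ to get a feasible point of \eqref{eq:svm1} with margin at least $1/\|\wdagger\|_2$, and rescale $(w^*,b^*)$ by $1/\gamma^*$ to get a feasible point of \eqref{eq:svm2} with norm $1/\gamma^*$ --- pins down $\gamma^* = 1/\|\wdagger\|_2$ and shows both constructed points are optimal, which is exactly the stated relation between optimizers. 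Two small points you gloss over, which the Fact itself also implicitly assumes: attainment of the optima and the validity of the normalization both require that each label appears at least once (otherwise $\wdagger = 0$ is optimal and $\wdagger/\|\wdagger\|_2$ is undefined, and the margin in \eqref{eq:svm1} can be driven to infinity by sending $b \to \pm\infty$); and your coercivity argument bounds only $w$ along a minimizing sequence, with boundedness of $b$ following from the constraints precisely because both classes are present. Neither issue affects the substance of the argument.
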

While \eqref{eq:svm1} more directly captures the maximum margin formulation, \eqref{eq:svm2} more clearly illustrates why it is a convex program that can be solved in polynomial time. When the data is not linearly separable, the linear constraints that define the convex program in \eqref{eq:svm2} just become infeasible. 

We have the following characterization that the solution of the \textsc{Hard-SVM} problem can be expressed as a linear combination of points which are all at the minimum distance (of exactly $1/\|\wdagger\|_2$) from the separating hyperplane (these points are called the support vectors). Note that the optimal solution of \eqref{eq:svm2} when it exists, is always unique (the objective is strongly convex). 
\begin{fact}\citep[Theorem 15.8 in][]{ShaiBSSbook}]\label{fact:svm2}
Let $\wdagger, \bdagger$ denote the optimal solution of \eqref{eq:svm2}, and let $I=\{i \in [m]: y_i (\wdagger \cdot x_i) =1\}$. Then there exists coefficients $\alpha_1, \dots, \alpha_m \in \mathbb{R}$ such that $\wdagger=\sum_{i \in I} \alpha_i x_i$.  
\end{fact}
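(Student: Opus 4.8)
The plan is to prove this \emph{representer} property directly from the optimality and uniqueness of $\wdagger$ via a short perturbation argument, rather than invoking the full Lagrangian/KKT machinery (which is the standard textbook route). Write $V = \mathrm{span}\{x_i : i \in I\}$ for the span of the support vectors, and decompose the optimal weight vector orthogonally as $\wdagger = w_V + w_\perp$ with $w_V \in V$ and $w_\perp \perp V$. The goal reduces to showing $w_\perp = 0$, since then $\wdagger = w_V \in V$ is exactly a linear combination $\sum_{i \in I} \alpha_i x_i$ of the support vectors with real coefficients, as claimed.

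To show $w_\perp = 0$, I would suppose $w_\perp \neq 0$ and construct a strictly better feasible solution, contradicting that $\wdagger$ is the \emph{unique} minimizer of \eqref{eq:svm2}. Consider the one-parameter family $w(t) = \wdagger - t\,w_\perp$ for small $t > 0$, keeping $b = \bdagger$ fixed. Two things must be checked: that feasibility is preserved and that the objective strictly decreases. For the objective, since $w_V \perp w_\perp$ we have $\wdagger \cdot w_\perp = \|w_\perp\|_2^2$, so $\|w(t)\|_2^2 = \|\wdagger\|_2^2 - 2t\|w_\perp\|_2^2 + t^2\|w_\perp\|_2^2$, which is strictly smaller than $\|\wdagger\|_2^2$ for all sufficiently small $t > 0$ whenever $w_\perp \neq 0$.

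The feasibility check is where the role of the set $I$ enters, and is the one step needing care. For a support vector $i \in I$ we have $x_i \in V$, hence $w_\perp \cdot x_i = 0$ and $w(t)\cdot x_i = \wdagger \cdot x_i$; thus the tight constraint $y_i(w(t)\cdot x_i + \bdagger) \ge 1$ continues to hold (with equality) for every $t$. For a non-support vector $i \notin I$ the constraint is \emph{strict} at the optimum, $y_i(\wdagger \cdot x_i + \bdagger) > 1$, so there is positive slack; since there are only finitely many such constraints and $w(t)$ depends continuously on $t$, all of them remain satisfied for $t$ below some positive threshold. Taking $t$ small enough to meet both conditions yields a feasible point with strictly smaller objective, contradicting optimality (uniqueness coming from strong convexity of $\|w\|_2^2$). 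Hence $w_\perp = 0$.

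I expect the main obstacle to be precisely this feasibility argument for the non-support-vectors: the naive idea of simply projecting $\wdagger$ onto $V$ can violate the constraints indexed by $i \notin I$, and what rescues the argument is that those constraints hold with \emph{strict} inequality, leaving room for a small perturbation. Equivalently, one can derive the same conclusion by writing the KKT conditions for the convex quadratic program \eqref{eq:svm2}: stationarity in $w$ gives $\wdagger = \sum_i \alpha_i y_i x_i$, and complementary slackness forces $\alpha_i = 0$ for every $i \notin I$; here the affine constraints automatically satisfy a constraint qualification (feasibility holds by linear separability), so KKT is necessary at the optimum, and absorbing the signs $y_i$ into the coefficients recovers the stated form.
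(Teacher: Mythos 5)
Your proof is correct, and there is an important point of comparison to make: the paper does not prove this statement at all --- it is quoted as Fact~\ref{fact:svm2} directly from Theorem 15.8 of \citet{ShaiBSSbook}, so your proposal supplies a proof where the paper only cites one. Your perturbation argument is sound: the orthogonal decomposition $\wdagger = w_V + w_\perp$, the computation $\|\wdagger - t\,w_\perp\|_2^2 = \|\wdagger\|_2^2 - (2t - t^2)\|w_\perp\|_2^2$, the invariance of the tight constraints (since $w_\perp \cdot x_i = 0$ for $i \in I$), and the persistence of the finitely many strict constraints for small $t>0$ together produce a feasible point of strictly smaller objective, contradicting optimality; hence $w_\perp = 0$. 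You are also right that the small-$t$ step is the crux --- the full projection onto $V$ (i.e., taking $t=1$) cannot be certified feasible, which is exactly the gap a naive projection argument would have --- and your closing KKT sketch is a valid alternative route (for affine constraints, KKT necessity requires no Slater-type condition, and since the fact allows arbitrary real coefficients, absorbing the signs $y_i$ and the vanishing multipliers of slack constraints gives the claim). Two small remarks. First, you only need that $\wdagger$ is \emph{a} minimizer, not the unique one, to reach the contradiction; uniqueness is irrelevant here. Second, your proof silently corrects a typo in the paper's statement: as written, $I=\{i \in [m]: y_i (\wdagger \cdot x_i) =1\}$ omits the bias, and with that literal definition the claim is false --- e.g., in one dimension with a positive point at $3$ and a negative point at $1$, the optimum is $\wdagger = 1$, $\bdagger = -2$, so the literal $I$ is empty while $\wdagger \neq 0$. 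Tightness must be read as $y_i(\wdagger \cdot x_i + \bdagger) = 1$, exactly as you do; the omission of $\bdagger$ presumably comes from the cited textbook theorem, which is stated for the homogeneous (bias-free) formulation.
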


 The above fact can be used to reformulate the objective \eqref{eq:svm2} in terms of the unknowns $\alpha_1, \dots, \alpha_m$ and the inner products between the points $\{ x_i \cdot x_j: i, j \in [m] \}$ (as opposed to the points $\{x_i: i \in [m] \}$ themselves). 

\begin{lemma}\label{lem:svm:iia}(SVM satisfies IIA property)
 The \textsc{Hard-SVM} procedure given by \eqref{eq:svm2} is IIA (as per Definition~\ref{def:IIA}) i.e., for any $S_+$ and $S'_- \supseteq C^*(S_+)$ (so that $\mathcal{H}(S_+, S'_-)=\mathcal{H}(S_+, S_-)$), the solution $w',b'$ on the labeled examples given by $(S_+, S'_-)$ is identical to the solution $w,b$ on the labeled examples $(S_+, S_-)$.    
\end{lemma}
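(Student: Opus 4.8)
The plan is to show that the Hard-SVM optimizer, as given by the strongly convex program~\eqref{eq:svm2}, depends on the negative examples only through quantities that are preserved when we enlarge the critical points $\CP(\Sp)$ to any separable negative set lying between $\CP(\Sp)$ and $\Sm$. Concretely, it suffices to prove that for every set $A$ with $\CP(\Sp) \subseteq A \subseteq \Sm$ the optimizer of~\eqref{eq:svm2} on the labeled data $(\Sp, A)$ is independent of $A$; since both $\Sm$ and $S'_-$ are such sets (here $S'_- \supseteq \CP(\Sp)$ is the negative set Bob receives, a subset of the true negatives), this immediately yields $\algoclassifier(\Sp,\Sm) = \algoclassifier(\Sp,S'_-)$. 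I would work with~\eqref{eq:svm2} rather than~\eqref{eq:svm1} because its objective $\|w\|_2^2$ is strongly convex, so the optimizer is unique and it is enough to argue that the feasible region is the same for all such $A$.

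Next I would extract the two invariances I need, both from the safe-points machinery. Writing $S' = \Safe(\Sp,\Sm)$, \Cref{l:connecting} gives $\CP(\Sp) = \Verts(S')$, and \Cref{l:same} (first bullet) gives $\Dir(\Sp,\Sm) = \Dir(\Sp,\CP(\Sp))$. Since $\Dir(\Sp,\cdot)$ is antitone in its negative argument, the sandwich $\CP(\Sp) \subseteq A \subseteq \Sm$ forces
$$\Dir(\Sp,\Sm) \subseteq \Dir(\Sp,A) \subseteq \Dir(\Sp,\CP(\Sp)) = \Dir(\Sp,\Sm),$$
so $\Dir(\Sp,A) = \Dir(\Sp,\Sm)$ for every such $A$. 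Moreover, \Cref{l:same} (second bullet) gives $\max\nolimits_d(\Sm) = \max\nolimits_d(\CP(\Sp))$ for every $d \in \Dir(\Sp,\Sm)$, and the same sandwich squeezes $\max\nolimits_d(A)$ between these equal values, so $\max\nolimits_d(A) = \max\nolimits_d(\Sm)$ for all $d$ in this common cone.

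Finally I would combine these. For the program~\eqref{eq:svm2} on $(\Sp, A)$, a pair $(w,b)$ is feasible iff $\min\nolimits_w(\Sp) \ge 1-b$ and $\max\nolimits_w(A) \le -1-b$; any feasible $(w,b)$ has $w \neq 0$ (otherwise the scalar constraints $b\ge 1$ and $-b\ge 1$ are inconsistent), and the two constraints force $\max\nolimits_w(A) \le \min\nolimits_w(\Sp)$, i.e.\ $w \in \Dir(\Sp,A) = \Dir(\Sp,\Sm)$. By the invariance above, $\max\nolimits_w(A) = \max\nolimits_w(\Sm)$ on this cone, so the negative-margin constraint $\max\nolimits_w(A)\le -1-b$ is equivalent to $\max\nolimits_w(\Sm)\le -1-b$. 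Hence the feasible region of~\eqref{eq:svm2} is literally the same set for $(\Sp,A)$ and $(\Sp,\Sm)$, the objective $\|w\|_2^2$ is identical, and by uniqueness of the minimizer of a strongly convex program the optimizers $(\wdagger,\bdagger)$ coincide. Taking $A = \Sm$ and $A = S'_-$ then proves the claim.

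I expect the main obstacle to be conceptual rather than computational: equality of the consistent-classifier sets $\cH(\Sp,\Sm) = \cH(\Sp,S'_-)$ (which follows from the leak characterization $\Leak(\Sp,\CP(\Sp)) = \Sp \cup \CP(\Sp)$ of \Cref{thm:verts}) is by itself \emph{not} enough to pin down the SVM solution, since the max-margin optimizer depends on the actual geometric extent of the negatives, not only on which hyperplanes strictly separate them. The extra ingredient that makes IIA go through is precisely that the support function $\max\nolimits_d(\cdot)$ of the negatives agrees across all separating directions $d\in\Dir(\Sp,\Sm)$, which is exactly what the $\Safe$/$\Dir$ lemmas supply. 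A tempting alternative route through the representer characterization (\Cref{fact:svm2}), arguing that every negative support vector is a critical point, runs into a genuine difficulty: several negative points can tie for the maximum in the direction $\wdagger$ while only some of them are vertices of $\Safe$, so a support vector need not lie in $\CP(\Sp)$. The feasible-region argument above sidesteps this issue.
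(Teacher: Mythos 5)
Your proof is correct, and it takes a genuinely different route from the paper's. The paper compares the two \emph{optima}: it takes the Hard-SVM solution $(w',b')$ for $(S_+,S'_-)$, argues via \Cref{c:same} and \Cref{thm:verts} that $(w',b')$ is consistent with all of $(S_+,S_-)$, and then shows its minimum margin over $S_+\cup S_-$ is attained by a point of $S_+\cup S'_-$ (if some $x'\in S_-\setminus S'_-$ had strictly smallest margin, a small translation of the hyperplane would leak $x'$ against $(S_+,S'_-)$, forcing $x'\in \CP(S_+)\subseteq S'_-$, a contradiction); hence $(w',b')$ is also max-margin for the full data, and uniqueness of the minimizer gives $(w,b)=(w',b')$. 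You instead prove that the two instances of \eqref{eq:svm2} have literally the same \emph{feasible region}, by sandwiching $A$ between $\CP(\Sp)=\Verts(\Safe(\Sp,\Sm))$ (\Cref{l:connecting}) and $\Sm$, and invoking both bullets of \Cref{l:same} to get $\Dir(\Sp,A)=\Dir(\Sp,\Sm)$ and $\max\nolimits_d(A)=\max\nolimits_d(\Sm)$ on that common cone. This is a stronger conclusion and buys two things: (i) identical feasible sets make the two programs the same optimization problem, so even the argmin sets coincide and the appeal to uniqueness is not actually needed (a nice bonus, since $\|w\|_2^2$ is strongly convex in $w$ but not in $(w,b)$, a gloss your write-up shares with the paper's proof but which your route can dispense with); and (ii) IIA follows for \emph{any} selection rule that depends only on the feasible region of \eqref{eq:svm2}, not just the max-margin objective. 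Both arguments ultimately rest on the same Safe/Dir machinery; yours packages it as a support-function invariance, the paper's as a consistency-plus-perturbation argument.

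One correction to your closing commentary: the claim that $\cH(S_+,S_-)=\cH(S_+,S'_-)$ is by itself ``not enough to pin down the SVM solution'' is false. For a separating hyperplane $(w,b)$ with $\|w\|_2=1$, its geometric margin equals the supremum of $\epsilon$ such that every translate $(w,b+\delta)$ with $|\delta|<\epsilon$ still lies in the consistent set; hence the margin function, and therefore the max-margin classifier, is determined by the set of consistent classifiers alone. This is in line with the paper's remarks around \Cref{def:IIA} and \Cref{lem:svm:iia} that SVM in fact satisfies the stronger notion of IIA. Your other caution --- that negative support vectors need not be critical points, so a representer-based argument is delicate --- is well taken and matches the paper's own footnote that not all support vectors need lie in $S_+\cup S'_-$.
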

We remark that SVM satisfies a stronger notion of IIA, where the positives can also be a subset $S'_+ \subseteq S_+$ such that the set of consistent hypothesis remains the same. However the above version suffices for our purposes.  
 \begin{proof}


Consider the SVM solution $w,b$ (for \eqref{eq:svm2}) on labeled data $S_+, S_-$. This solution is unique: if there are two minimizing solutions $(w,b), (w',b')$ then the solution $(\tfrac{1}{2}(w+w'), \tfrac{1}{2}(b+b'))$ also satisfies all the constraints of \eqref{eq:svm2} but attains a smaller objective value due to strong convexity. 

Consider the classifier $w',b'$ that attains the optimum margin for the dataset $S_+, S'_-$. From \Cref{c:same} (and \Cref{thm:verts}) classifier $(w',b') \in \mathcal{H}(S_+, S_-)$. Moreover the (minimum) margin over $S_+, S_-$ is attained by a point in $S_+ \cup S'_-$ i.e., $\min_{x \in S_+ \cup S'_-} |w'\cdot x + b'| = \min_{x \in S_+ \cup S_-} |w'\cdot x + b'| \eqqcolon \tau$.\footnote{Note that not all of the support vectors from Fact~\ref{fact:svm2} need to be in $S_+ \cup S'_-$. } Suppose not. There exists a point $x' \notin S_+ \cup S'_-$ with label $y'$ (say $y'=-1$) such that $y'(w'\cdot x' + b') = \tau$, but for all labeled examples $(x,y)$ given by $S_+ \cup S'_-$, $y(w'\cdot x+b') >\tau$. Then $x' \in S_+ \cup C^*(S_+) \subseteq  S_+ \cup S'_-$ (if $y=-1$, $x' \in C^*(S_+)$) which gives a contradiction. 

Hence $w',b'$ also achieves the same margin on $S_+, S_-$ (as it does on $S_+, C^*(S_+)$). This implies that $w',b'$ is also a optimum margin classifier on $S_+, S_-$ (since it had a larger margin than $w,b$ on $A^*_+, A^*_-$). Since the solution to \eqref{eq:svm2} is unique, we conclude that $(w,b) = (w',b')$. 
\end{proof}

\section{Extensions to Kernel Classifiers} \label{sec:kernel}


Our results for linear classifiers in the previous section naturally extend to kernel-based classifiers like kernel support vector machines. Kernel-based classifiers can be much more expressive than linear classifiers as they embed the input space into a feature space that is high-dimensional (potentially infinite dimensional), where the data is potentially linearly separable. Popular kernels include the polynomial kernel, that can capture any polynomial threshold function (a classifier of the form $\text{sign}(p(x))$ where $p(x)$ is any polynomial of $x$), radial basis kernels (e.g., Gaussian kernels) etc.

We start by recalling some notation and facts about kernels. In what follows, $\psi: \mathcal{X} \to \mathcal{V}$ embeds points in the input space $\mathcal{X}$ into a Hilbert space $\mathcal{V}$. The kernel function is given by the inner product $K(x,x') = \iprod{\psi(x), \psi(x')}_{\mathcal{V}}$, and we assume that this can be computed in polynomial time given $x,x'$. Given a set of points $x_1, \dots, x_m \in \mathcal{X}$, the $m \times m$ matrix formed with the $(i,j)$th entry $K(x_i, x_j)$ is positive semi-definite. The following standard theorem states that one can find the maximum margin classifier in the feature space $\psi(\mathcal{X})$ for a given set of $m$ samples, by solving a simpler convex optimization problem  over $m$ dimensions.  

\begin{theorem}[see e.g., Chapter 16 of ~\citet{ShaiBSSbook}] \label{thm:kernelSVM} For a given set of samples $(x_1,y_1), \dots, (x_m,y_m) \in \mathcal{X} \times \{ \pm 1\}$, consider the  Hard-SVM problem (or the maximum margin classifier problem): 
\begin{equation} \label{eq:kernelSVM}
\min_{w \in \mathcal{V}, b \in \R} \|w \|_2^2, \text{ s.t. } ~\forall i \in [m],~~ y_i \Big( {w \cdot \psi(x_i)} +b \Big) \ge 1.    
\end{equation}
%
An optimal solution $(w^*, b^*)$ to \eqref{eq:kernelSVM} can be obtained in polynomial time by finding an optimal solution to the following convex optimization problem and setting $w^*=\sum_{i=1}^m \alpha^*_i \psi(x_i)$:
\begin{align} \label{eq:kernelSVM2}
(\alpha^*, b^*)&=\argmin_{\alpha \in \R^m, b \in \R} \sum_{i,j \in [m]} K(x_i, x_j) \alpha_i \alpha_j, \nonumber\\
\text{ s.t. } ~& \forall i \in [m],~~ y_i \Big( \sum_{j \in [m]} K(x_i,x_j) \alpha_j  +b \Big) \ge 1.
\end{align}
Moreover the corresponding linear classifier is given by $h(x) = \text{sign}\big(\sum_{j \in [m]} \alpha_j K(x_j, x) + b \big)$. 
\end{theorem}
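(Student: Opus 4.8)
The plan is to establish this standard result through the \emph{representer theorem}, which reduces the optimization over the (possibly infinite-dimensional) Hilbert space $\mathcal{V}$ in \eqref{eq:kernelSVM} to a finite-dimensional convex program in the dual coefficients $\alpha$, and then to read off convexity and the classifier form by bilinearity of the inner product.

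First I would show that the search for $w^*$ can be restricted without loss to the span of the feature vectors $\{\psi(x_1), \dots, \psi(x_m)\}$. Write $\mathcal{U} = \mathrm{span}\{\psi(x_i) : i \in [m]\}$ and decompose any feasible $w = u + v$, where $u$ is the orthogonal projection of $w$ onto $\mathcal{U}$ and $v \perp \mathcal{U}$. Since $v$ is orthogonal to every $\psi(x_i)$, we have $w \cdot \psi(x_i) = u \cdot \psi(x_i)$, so $u$ satisfies exactly the same margin constraints as $w$. The Pythagorean identity $\|w\|_2^2 = \|u\|_2^2 + \|v\|_2^2 \ge \|u\|_2^2$ then shows that replacing $w$ by $u$ is feasible and does not increase the objective. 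Hence the optimum is attained within $\mathcal{U}$, and any optimizer can be written as $w^* = \sum_{i=1}^m \alpha^*_i \psi(x_i)$.

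Next I would substitute this representation into \eqref{eq:kernelSVM} and express everything through the kernel. By bilinearity, $\|w\|_2^2 = \sum_{i,j} \alpha_i \alpha_j \inner{\psi(x_i)}{\psi(x_j)} = \sum_{i,j} \alpha_i \alpha_j K(x_i, x_j)$ and $w \cdot \psi(x_i) = \sum_{j} \alpha_j K(x_i, x_j)$, which turns the objective and constraints of \eqref{eq:kernelSVM} into precisely those of \eqref{eq:kernelSVM2}, a problem over the $m+1$ real variables $(\alpha, b)$. The objective $\sum_{i,j} K(x_i,x_j)\alpha_i\alpha_j$ is a convex quadratic form because the Gram matrix $\big(K(x_i,x_j)\big)_{i,j}$ is positive semidefinite by assumption, and the constraints are linear in $(\alpha,b)$; thus \eqref{eq:kernelSVM2} is a convex program that is solvable in polynomial time (e.g., by interior-point or ellipsoid methods), since each entry $K(x_i,x_j)$ is computable in polynomial time by hypothesis. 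Finally, for any input $x$, the learned classifier evaluates to $\mathrm{sign}(w^* \cdot \psi(x) + b) = \mathrm{sign}\big(\sum_{j} \alpha^*_j K(x_j, x) + b\big)$, again by bilinearity, giving the stated form.

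The main obstacle, and the only place needing care beyond routine algebra, is the orthogonal decomposition step in the infinite-dimensional setting: I would emphasize that $\mathcal{U}$ is \emph{finite}-dimensional and therefore a closed subspace of the Hilbert space $\mathcal{V}$, which guarantees that the orthogonal projection of $w$ onto $\mathcal{U}$ exists and that the Pythagorean identity holds. A secondary point is attainment of the minimum: the representer argument reduces the problem to $\mathcal{U}\cong\reals^m$, where the feasible region is nonempty (by separability of the samples in feature space) and standard coercivity/compactness arguments ensure the $\argmin$ in \eqref{eq:kernelSVM2} is well-defined. Everything else is bookkeeping with inner products.
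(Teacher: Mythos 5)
Your proof is correct, and it coincides with the standard argument for this result: the paper itself gives no proof, deferring entirely to Chapter 16 of \citet{ShaiBSSbook}, and that reference establishes the theorem exactly as you do --- orthogonal projection of $w$ onto $\mathrm{span}\{\psi(x_1),\dots,\psi(x_m)\}$, the Pythagorean identity to justify restricting to the span, substitution via bilinearity to obtain the finite-dimensional program \eqref{eq:kernelSVM2}, convexity from positive semidefiniteness of the Gram matrix, and the kernel form of the classifier. Your added care about the span being finite-dimensional (hence closed, so the projection exists) and about attainment of the minimum is sound and slightly more thorough than the cited treatment.
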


Our main observation is that implementing the protocol in the previous section only involves solving a set of linear classification problems. These include the computation of the set $C^*(A_+)$ by {\TTP} (one problem for each point in $S_-$), and the computation of the set $\Leak(A_+, A_-)$ by {\TTP} (one problem for each point in $S$). These linear classification problems are solved using the Hard-SVM problem (maximum margin linear classifier), which also satisfies the IIA property. We can carry out the same arguments as in the previous sections on the points $\{(\psi(x_i), y_i): i \in [m]\}$ by only accessing inner products $\{ \psi(x_i) \cdot \psi(x_j): i,j \in [m]\}$ (through the kernel function). Hence from Theorem~\ref{thm:kernelSVM} we immediately obtain the generalization of our guarantees to kernel-based classifiers.

\begin{theorem}
Suppose a kernel function $K: \mathcal{X} \times \mathcal{X} \to \R$ is specified as above and efficiently computable (and known to Alice, Bob and {\TTP}). Given a set of points $S=(S_+, S_-)$ that is realizable (w.r.t. to linear classifiers over the space $\psi(\mathcal{X})$), there is a protocol (the critical points protocol implemented using kernel SVM) that is correct, minimal, computationally efficient and truthful. 
Moreover, if the data set $U$ is realizable (w.r.t. linear classifiers over the space $\psi(\mathcal{X})$), Bob uses the kernel SVM algorithm in Step 6 of \Cref{alg:ml:mped} (MPeD), then the outputs of \Cref{alg:ml:sped} (SPeD) and \Cref{alg:ml:mped} (MPeD) are identical assuming the same sampling procedure (for generating $S$ from $U$).
\end{theorem}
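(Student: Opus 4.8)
The plan is to lift the entire development of \Cref{sec:ccp} and \Cref{sec:svm} from $\reals^n$ to the feature space $\mathcal{V}$, and then invoke \Cref{thm:kernelSVM} to certify that every step remains polynomial-time when the data is accessed only through the kernel. The first thing to observe is that, although $\mathcal{V}$ may be infinite dimensional, every object in the protocol is determined by the finitely many embedded sample points $\{\psi(x):x\in S\}$. These span a finite-dimensional subspace $V_0\subseteq\mathcal{V}$ of dimension $k\le|S|$, and for any candidate classifier $w\in\mathcal{V}$ the sign of $\iprod{w,\psi(x)}_{\mathcal{V}}+b$ at a point $x\in S$ depends only on the orthogonal projection of $w$ onto $V_0$, since the $V_0^{\perp}$-component is annihilated by every $\psi(x)$. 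Hence in deciding consistency, leakage, and criticality of points of $S$ we may restrict attention to classifiers $w\in V_0$; fixing an orthonormal basis identifies $V_0$ with $\reals^k$ and carries the embedded points to a finite point set in $\reals^k$.

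Under this identification the definitions of $\SH$, $\Leak$, $\CP$, $\Dir$, $\Safe$, $\Verts$ and the chain of results \Cref{f:vertex}, \Cref{l:finitemax}, \Cref{c:same}, \Cref{c:leak-verts}, \Cref{l:connecting}, \Cref{thm:verts} and \Cref{thm:min} apply verbatim to the finite point set $\{\psi(x):x\in S\}\subset V_0\cong\reals^k$; in particular the critical points protocol run in feature space is \emph{correct}, reveals exactly $\Sp\cup\CP(\Sp)$ under truthful reporting (hence \emph{minimal}), and is \emph{truthful}, where now $\Leak$ and $\CP$ are taken with respect to linear classifiers over $\psi(\mathcal{X})$. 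For \emph{efficiency} the key point is that no party ever needs a basis of $V_0$ or the explicit vectors $\psi(x)$: deciding whether $x\in\Leak(\Ap,\Am)$, assembling $\CP(\Ap)$ by running this test once per point of $\Sm$, Bob's training step, and {\TTP}'s final classification of $U$ all reduce to linear feasibility or maximum-margin problems over at most $|S|+1$ embedded points, each solved in polynomial time through the dual program \eqref{eq:kernelSVM2} using only the kernel matrix $\{K(x_i,x_j)\}$, as guaranteed by \Cref{thm:kernelSVM}. This establishes the first assertion of the statement.

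It remains to transfer the equivalence of \Cref{alg:ml:sped} and \Cref{alg:ml:mped}. By \Cref{prop:IIA} it suffices to verify that kernel SVM satisfies the IIA property of \Cref{def:IIA} in feature space, and I would do this by replaying the proof of \Cref{lem:svm:iia} inside $V_0$. The maximum-margin solution is unique because $\|w\|_{\mathcal{V}}^2$ is strongly convex and its optimizer lies in $V_0$ (any $V_0^{\perp}$-component only increases the norm while leaving all constraints unchanged, which is exactly the representer property of \Cref{fact:svm2}); \Cref{c:same} holds for the feature-space $\Safe$; and the argument that the minimum margin is attained at a point of $S_+\cup S'_-$ uses only the inclusion $\CP(\Sp)\subseteq S'_-$. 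Hence kernel SVM is IIA, and \Cref{prop:IIA} yields that the outputs of SPeD and MPeD coincide.

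The main obstacle is the very first reduction: justifying that the finite-dimensional convex-analytic tools underlying \Cref{f:vertex} (the fundamental theorem of linear programming) and \Cref{l:finitemax} (LP duality together with Weyl's theorem) remain valid over the possibly infinite-dimensional $\mathcal{V}$. The clean resolution is to sidestep infinite dimensions entirely by passing to $V_0=\mathrm{span}\{\psi(x):x\in S\}$; the one fact that must be checked with care is that restricting classifiers to $V_0$ is without loss, i.e., that the set of consistent classifiers and their behavior on $S$ are unaffected by the $V_0^{\perp}$-component, which is again precisely \Cref{fact:svm2}. Once this reduction is in place, every remaining step is a direct re-application of an already-proved result, with \Cref{thm:kernelSVM} supplying the polynomial-time kernelized implementations.
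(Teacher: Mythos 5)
Your proposal is correct and follows essentially the same route as the paper: run the critical points protocol of \Cref{sec:ccp} on the embedded points, implement every consistency/leakage/criticality test and the training step via the kernelized Hard-SVM of \Cref{thm:kernelSVM}, and verify that kernel SVM retains the IIA property so that \Cref{prop:IIA} transfers the SPeD/MPeD equivalence. Your one addition --- the explicit reduction to the finite-dimensional span $V_0=\mathrm{span}\{\psi(x):x\in S\}$, justified by the projection argument that the $V_0^{\perp}$-component of a classifier is invisible on $S$ and only increases the norm --- makes rigorous a step the paper leaves implicit, since the paper's ``carry out the same arguments'' silently requires that the finite-dimensional convex-analytic tools behind \Cref{f:vertex} and \Cref{l:finitemax} (the fundamental theorem of linear programming, Weyl's theorem, LP duality) remain applicable in the possibly infinite-dimensional feature space.
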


\section{Discussion and Open Questions}


In this paper, we considered a multi-party classification problem that is motivated by e-discovery. 
We designed a protocol (critical points protocol) in the realizable setting of linear classifiers and kernel SVMs, that is correct, minimal, computationally efficient and truthful. Moreover this protocol fits into a machine learning framework along with any classifier that satisfies the a natural property called IIA; we provide a reduction to the standard single-party setting, thereby demonstrating that there is no loss in statistical efficiency.

The most natural direction for future research is to generalize to the non-realizable setting, where there is no perfect classifier from the hypothesis class. Here we may need to relax our requirement of computational efficiency, as the problem of learning a linear classifier in the non-realizable setting (also called agnostic learning) is known to be computationally intractable even in the single-party setting~\cite{ShaiBSSbook}. However one may be able to obtain efficient protocols assuming access to an efficient (single-party) learning algorithm. 







\subsection*{Acknowledgements}
  This work began during the IDEAL Special Quarter on Data
  Science and Law organized by Jason Hartline and Dan Linna.  Many
  thanks to Dan Linna for legal context and feedback on the project.
  The work was supported in part by NSF award CCF-1934931.
\bibliographystyle{apalike}
\bibliography{ediscovery}

\appendix
\end{document}